\begin{document}

\title{The Weight Distributions of Two Classes of $p$-ary Cyclic Codes with Few Weights
}

\titlerunning{Weight Distributions of Cyclic Codes}        

\author{Shudi Yang      \and
        Zheng-An Yao    \and~
        Chang-An Zhao
        }


\institute{S.D. Yang \at
              Department of Mathematics,
Sun Yat-sen University, Guangzhou 510275 and School of Mathematical
Sciences, Qufu Normal University, Shandong 273165, P.R.China \\
              Tel.: +86-15602338023\\
                            \email{yangshd3@mail2.sysu.edu.cn}           
           \and
           Z.-A. Yao \at
               Department of Mathematics,
Sun Yat-sen University, Guangzhou 510275, P.R. China
        \and
        C.-A. Zhao  \at
         Department of Mathematics,
Sun Yat-sen University, Guangzhou 510275, P.R. China}

\date{Received: date / Accepted: date}

\maketitle

\begin{abstract}
Cyclic codes have attracted a lot of research interest for decades as
they have efficient encoding and decoding algorithms.
 In this paper, for an odd prime $p$,
the weight distributions of two classes of $p$-ary cyclic codes
are completely determined. We show that both
codes have at most five nonzero weights.

\keywords{Cyclic code \and Quadratic form \and Exponential sum
\and\\
Weight distribution}
\end{abstract}

\vspace{1\baselineskip}
\noindent$\displaystyle \mathbf{Mathematics~~ Subject~~ Classification~~~} $    11T71$ \cdot$94B15

\section{Introduction}\label{sec:intro}

 Throughout this paper, let $p$ be an odd prime. Denote by $\mathbb{F}_p$ a
finite field with $p$ elements. An $[n, \kappa, l\,]$ linear code
$C$ over $\mathbb{F}_p$ is a $\kappa$-dimensional subspace of
$\mathbb{F}_p^n$ with minimum distance $l$. Moreover, the code is
cyclic if every codeword $(c_0,c_1,\cdots,c_{n-1})\in C $ whenever
$(c_{n-1},c_0,\cdots,c_{n-2})\in C $. Any cyclic code $C$ of length
$n$ over $\mathbb{F}_p$ can be viewed as an ideal of
$\mathbb{F}_p[x]/(x^n-1)$. Therefore, $C=\left\langle
g(x)\right\rangle $, where $g(x)$ is the monic polynomial of lowest degree
and divides $x^n-1$. Then $g(x)$ is called the
generator polynomial and $h(x)=(x^n-1)/g(x)$ is called the parity-check
polynomial \cite{macwilliams1977theory}.

Let $A_i$ denote the number of codewords with Hamming weight
$i$ in a linear code $C$ of length $n$. The weight enumerator of  $C$  is defined by
$$A_0+A_1x+A_2x^2+\cdots+A_nx^n,$$
where $A_0=1$. The sequence $(A_0,A_1,A_2,\cdots,A_n)$ is called the weight
distribution of the code $C$.

 Cyclic codes have found wide applications in
cryptography, error correction, association schemes and network
coding due to their efficient encoding and decoding algorithms.
However, there are still many open problems in coding theory (for details
see \cite{charpin1998open,Ding2014binary,macwilliams1977theory}).

It is an interesting subject to study the weight distribution of a
linear code. Firstly, the information of the error correcting
capability of a code is achieved from the weight distribution, i.e.,
the minimum distance $l$ is the minimum positive integer $i$ such
that $A_i>0$. Secondly, the weight distribution of a cyclic code is
closely related to the lower bound on the cardinality of a set of
nonintersecting linear codes, which can be applied to prove
the existence of resilient functions with high nonlinearity (see Theorem 4 of \cite{Johansson2003construction}).
 Finally, cyclic codes  with few weights have found
interesting applications in cryptography
\cite{carlet2005linear,YD2006}. Therefore, the weight distribution
is the major basis of computing the error probability of error
detection and correction, and it is the primary tool of researching
the structure of a code, improving the inner relationship of
codewords for finding a new good code. We refer the reader to \cite{Ding2013cyclotomic}
and \cite{Ding2014binary} given by Ding $et~al.$ for details on constructing optimal or almost
optimal cyclic codes in the sense that they meet some bounds on linear codes.

In recent years, much attention has been paid to evaluating the weight distribution of
cyclic codes though it is usually an extremely difficult problem. However, they are known
only in a few special cases. For example, the authors in \cite{ding2009weight,ding2013hamming,McEliece1975irre}
studied the weight distributions of irreducible cyclic codes.
For reducible cyclic codes, the authors in
\cite{feng2007value,li2014hamming,luo2008cyclic,luo2008weight,zhou2014class}
settled the weight distributions of cyclic codes whose duals have
two zeros. The authors of \cite{zeng2010weight,zheng2014weight,zheng2013weight,Zhou2013fiveweight} dealt with a few classes of cyclic codes whose duals have three zeros. As for cyclic codes whose duals have arbitrary zeros, see \cite{liYue2014weight} or \cite{yangjing2013arbitrary} for example.


Let $m$ and $k$ be two positive integers with $m>k$. For now on, we denote by $\alpha$
a primitive
element of $\mathbb{F}_{p^m}$. Let
$h_1(x)$ and $h_2(x)$ be the minimal polynomials of $\alpha^{-(p^k+1)}$
and $\alpha^{-1}$ over $\mathbb{F}_p$, respectively. Obviously,
$h_1(x)$ and $h_2(x)$ are pairwise distinct and
$\mathrm{deg}(h_2(x))=m$. Moreover, it can be easily shown that
$\mathrm{deg}(h_1(x))=m/2$ if $m=2k$ and $m$ otherwise.

Let $C_1$ and $C_2$ be two cyclic codes over $\mathbb{F}_p$ of length
$n=p^m-1$ with parity-check polynomials $h_1(x)h_2(x)$ and
$(x-1)h_1(x)$, respectively. Hence, the dimensions of $C_1$ and
$C_2$ over $\mathbb{F}_p$ are $3m/2$ and ${m}/{2}+1$, respectively, if
$m=2k$; and otherwise, the dimensions of $C_1$ and $C_2$ are $2m$ and $m+1$, respectively.


Let $d=\mathrm{gcd}(k,m)$ denote the greatest common
divisor of $k$ and $m$. Take $s={m}/{d}$. Note that the cyclic code $C_1$ was defined
 by Carlet, Ding and Yuan in \cite{carlet2005linear} and a tight lower
bound on the minimum distance was also determined. Later, the authors in \cite{yuan2006weight} established
the weight distribution of $C_1$ for odd $s$ (see also \cite{feng2007value,lichao2009covering}).
However, to the best of our knowledge,  there is
no information about the weight distribution of $C_1$ in the
case of even $s$.

In this paper, we explicitly determine the weight distribution of the
code $C_1$ for even $s$ and the weight distribution of the
code $C_2$, respectively. Furthermore, the
results show that both $C_1$ and $C_2$ are cyclic codes
with few weights. In fact, the number of nonzero weights of these
codes is no more than $five$. This means that the two classes
of cyclic codes may be of use in
cryptography \cite{mceliece1978public} and secret sharing schemes
\cite{carlet2005linear}.


The remainder of this paper is organized as follows. In Section
\ref{sec:Preli}, we introduce some definitions and results on
quadratic forms and exponential sums. Section \ref{sec:C1}
investigates the weight distribution of the code $C_1$ for even $s$. Section
\ref{sec:C2} studies the weight distribution of the code $C_2$.
Section \ref{sec:conclusion} concludes this paper and makes some
remarks on this topic.

\section{Preliminaries}\label{sec:Preli}

We follow the notations in Section \ref{sec:intro}. Let $q$ be a
power of $p$ and $t$ be a positive integer. By identifying
the finite field $\mathbb{F}_{q^t}$ with a $t$-dimensional vector
space $\mathbb{F}^t_{q}$ over $\mathbb{F}_{q}$, a function $f(x)$
from $\mathbb{F}_{q^t}$ to $\mathbb{F}_{q}$ can be regarded as a
$t$-variable polynomial over $\mathbb{F}_{q}$. The function $f(x)$ is called
a quadratic form if it can be written as a
homogeneous polynomial of degree two on  $\mathbb{F}^t_{q}$ as
follows:
$$f(x_1,x_2,\cdots,x_t)=\sum_{1\leqslant i \leqslant j\leqslant t}a_{ij}x_ix_j,~~a_{ij}\in \mathbb{F}_{q}.$$
Here we fix a basis of  $\mathbb{F}^t_{q}$ over  $\mathbb{F}_{q}$
and identify each $x\in \mathbb{F}_{q^t}$ with a vector
$(x_1,x_2,\cdots,x_t)\in\mathbb{F}^t_{q}$.
 The rank of the quadratic form $f(x)$, rank$(f)$, is defined as the
codimension of the $\mathbb{F}_{q}$-vector space
$$W=\{x\in \mathbb{F}_{q^t}|f(x+z)-f(x)-f(z)=0, ~~for~~all~~z\in \mathbb{F}_{q^t}\}.$$
Then $|W|=q^{t-\mathrm{rank}(f)}$.

For a quadratic form $f(x)$ with $t$ variables over $\mathbb{F}_q$,
there exists a symmetric matrix $A$ of order $t$ over $\mathbb{F}_q$
such that $f(x)=XAX'$, where $X=(x_1,x_2,\cdots,x_t)\in
\mathbb{F}^t_q$ and $X'$ denotes the transpose of $X$. It is known
that there exists a nonsingular matrix $B$ over $\mathbb{F}_q$ such
that $BAB'$ is a diagonal matrix. Making a nonsingular linear
substitution $X=YB$ with $Y=(y_1,y_2,\cdots,y_t)\in \mathbb{F}^t_q$,
we have
$$f(x)=Y(BAB')Y'=\sum^r_{i=1}a_iy^2_i,\,\,\,a_i\in \mathbb{F}^*_q,$$
where $r$ is the rank of $f(x)$. The determinant $\mathrm{det}(f)$
of $f(x)$ is defined to be the determinant of $A$, and $f(x)$ is said to be
nondegenerate if $\mathrm{det}(f)\neq0$.

The lemmas introduced below will turn out to be of use in the
sequel.

\begin{lemma}(See Theorems 5.15 and 5.33 of \cite{lidl1983finite})\label{lm: solu of simple quadra form}
Let $\mathbb{F}_{p^t}$ be a finite field with  $p^t$ elements and
$\eta_t$ be the multiplicative quadratic character of
$\mathbb{F}_{p^t}$. For $a\in\mathbb{F}^*_{p^t}$,
\begin{eqnarray*}
\sum_{x\in
\mathbb{F}_{p^t}}\zeta^{\mathrm{Tr}^t_1(ax^2)}_p=\eta_t(a)(-1)^{t-1}(\sqrt{-1})^{\frac{t}{4}(p-1)^2}p^{\frac{t}{2}},
\end{eqnarray*}where $\zeta_p=e^{2\pi\sqrt{-1}/p}$ and $\mathrm{Tr}^t_1$ is a trace function from
$\mathbb{F}_{p^t}$ to $\mathbb{F}_{p}$ defined by
$$\mathrm{Tr}^t_1(x)=\sum^{t-1}_{i=0}x^{p^i},~~x\in
\mathbb{F}_{p^t}.$$
\end{lemma}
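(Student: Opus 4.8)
The plan is to recognize the sum as a twisted quadratic Gauss sum and evaluate it in two stages: first reduce it to the canonical Gauss sum of $\mathbb{F}_{p^t}$, then evaluate that Gauss sum. Write $\chi(y)=\zeta^{\mathrm{Tr}^t_1(y)}_p$ for the canonical additive character of $\mathbb{F}_{p^t}$, so that the quantity in question is $S=\sum_{x\in\mathbb{F}_{p^t}}\chi(ax^2)$. Since $p$ is odd, for each $y\in\mathbb{F}^*_{p^t}$ the equation $x^2=y$ has exactly $1+\eta_t(y)$ solutions, so collecting the terms according to the value $y=x^2$ gives
$$S=1+\sum_{y\in\mathbb{F}^*_{p^t}}\bigl(1+\eta_t(y)\bigr)\chi(ay).$$
Because $a\neq0$, the map $y\mapsto ay$ is a bijection of $\mathbb{F}_{p^t}$, so $\sum_{y\in\mathbb{F}_{p^t}}\chi(ay)=0$ and hence $\sum_{y\in\mathbb{F}^*_{p^t}}\chi(ay)=-1$; the constant contributions therefore cancel and leave $S=\sum_{y\in\mathbb{F}^*_{p^t}}\eta_t(y)\chi(ay)$.

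Next I would factor out the dependence on $a$. Substituting $z=ay$ and using $\eta_t(a^{-1})=\eta_t(a)$ (valid since $\eta_t$ is quadratic) yields
$$S=\eta_t(a)\sum_{z\in\mathbb{F}^*_{p^t}}\eta_t(z)\chi(z)=\eta_t(a)\,G(\eta_t),$$
where $G(\eta_t)$ denotes the quadratic Gauss sum of $\mathbb{F}_{p^t}$. Thus the whole problem is reduced to showing $G(\eta_t)=(-1)^{t-1}(\sqrt{-1})^{\frac{t}{4}(p-1)^2}p^{t/2}$, after which the stated identity follows immediately.

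Then I would evaluate the prime-field Gauss sum $G(\eta_1)=\sum_{c\in\mathbb{F}^*_p}\eta_1(c)\zeta^{c}_p$. Its magnitude is easy: computing $G(\eta_1)\overline{G(\eta_1)}$ and applying orthogonality of additive characters gives $|G(\eta_1)|^2=p$. Pinning down the exact value $G(\eta_1)=(\sqrt{-1})^{(p-1)^2/4}\,p^{1/2}$ — that is, $p^{1/2}$ when $p\equiv1\pmod4$ and $\sqrt{-1}\,p^{1/2}$ when $p\equiv3\pmod4$ — is the classical sign determination of Gauss, and this is exactly where the real difficulty lies. I would obtain it by the standard argument of Theorem 5.15, e.g.\ diagonalizing the finite Fourier transform whose eigenvalue is this Gauss sum, which fixes the phase.

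Finally I would lift to the extension field via the Davenport--Hasse relation (Theorem 5.14), which asserts $G(\eta_t)=(-1)^{t-1}G(\eta_1)^t$ for the characters lifted through the norm and trace maps from $\mathbb{F}_{p^t}$ down to $\mathbb{F}_p$. Raising the prime-field value to the $t$-th power gives $G(\eta_t)=(-1)^{t-1}(\sqrt{-1})^{\frac{t}{4}(p-1)^2}p^{t/2}$, and combined with $S=\eta_t(a)G(\eta_t)$ this is precisely the claimed formula. The main obstacle is the second stage, the Gauss sign problem over the prime field; the initial reduction to a Gauss sum and the Davenport--Hasse lifting are comparatively routine.
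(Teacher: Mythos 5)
Your argument is correct: the reduction of $\sum_x\chi(ax^2)$ to $\eta_t(a)G(\eta_t)$ via counting square roots is exactly the content of Theorem 5.33 of Lidl--Niederreiter, and evaluating $G(\eta_t)$ by the prime-field sign determination plus the Davenport--Hasse lifting is exactly how Theorem 5.15 is proved there; since the paper offers no proof of this lemma and simply cites those two theorems, your proposal is a faithful reconstruction of the intended argument. The only step you leave as a black box --- the sign of the prime-field quadratic Gauss sum --- is the genuinely nontrivial ingredient, but you identify it correctly and point to a standard proof, so there is no gap in substance.
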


\begin{lemma}(See Theorems 6.26 and 6.27 of \cite{lidl1983finite})\label{lm: solution of quadra form}
Let $f$ be a nondegenerate quadratic form over $\mathbb{F}_q$,
$q=p^t$ for odd prime $p$, in $l$ variables. Define a function
$\upsilon(\cdot)$ over $\mathbb{F}_q$ by $\upsilon(0)=q-1$ and
$\upsilon(\rho)=-1$ for $\rho\in\mathbb{F}^*_q$. Then for
$b\in\mathbb{F}_q$ the number of solutions of the equation
$f(x_1,\cdots,x_l)=b$ is
\begin{eqnarray*}
\left\{\begin{array}{lll}q^{l-1}+\upsilon(b)q^{\frac{l-2}{2}}\eta_t\left((-1)^\frac{l}{2}\mathrm{det}(f)\right),
&&if~~ l~~ is~~ even,\\
q^{l-1}+q^{\frac{l-1}{2}}\eta_t\left((-1)^\frac{l-1}{2}b~
\mathrm{det}(f)\right), &&if~~ l~~ is
~~odd,\\
\end{array}
\right.
\end{eqnarray*}
where $\eta_t$ is the quadratic character of $\mathbb{F}_q$.
\end{lemma}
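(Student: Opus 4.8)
The plan is to reduce the count to one-variable quadratic Gauss sums, which Lemma~\ref{lm: solu of simple quadra form} already evaluates, and then to organize the leftover sum over an auxiliary character variable. First I would diagonalize: since $f$ is nondegenerate of rank $l$, the substitution $X=YB$ described in the preliminaries writes $f(x)=\sum_{i=1}^{l}a_iy_i^2$ with every $a_i\in\mathbb{F}_q^*$. Such a substitution is a bijection of $\mathbb{F}_q^l$, so it preserves the number of solutions of $f=b$; and because $A\mapsto BAB'$ alters the determinant only by the square $(\det B)^2$, the quantity $\eta_t(\mathrm{det}(f))=\eta_t(a_1\cdots a_l)$ is invariant and may safely be computed in this diagonal model.

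Next I would pass to additive characters. Writing $\chi(y)=\zeta_p^{\mathrm{Tr}^t_1(y)}$ for the canonical additive character of $\mathbb{F}_q$ and invoking orthogonality, the number of solutions of $f=b$ is
$$N(b)=\frac1q\sum_{c\in\mathbb{F}_q}\sum_{y\in\mathbb{F}_q^l}\chi\!\left(c\big(f(y)-b\big)\right).$$
The term $c=0$ contributes $q^{l-1}$. For $c\neq0$ the inner sum factors over the diagonal coordinates, and each factor $\sum_{y_i}\chi(ca_iy_i^2)$ is precisely the sum evaluated in Lemma~\ref{lm: solu of simple quadra form}, namely $\eta_t(ca_i)G$ with $G=(-1)^{t-1}(\sqrt{-1})^{\frac t4(p-1)^2}p^{t/2}$. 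Taking the product over $i$ folds the dependence on the $a_i$ into $\eta_t(\mathrm{det}(f))$ and leaves $\eta_t(c)^lG^l$, whence
$$N(b)=q^{l-1}+\frac1q\,\eta_t(\mathrm{det}(f))\,G^l\sum_{c\in\mathbb{F}_q^*}\eta_t(c)^l\,\chi(-cb).$$

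It then remains to evaluate the sum over $c$ in the two parities and to eliminate the transcendental factor $G$ using $G^2=\eta_t(-1)q$, itself a consequence of Lemma~\ref{lm: solu of simple quadra form}. When $l$ is even, $\eta_t(c)^l=1$, the sum over $c$ collapses to $\upsilon(b)$, and $G^l=\eta_t((-1)^{l/2})q^{l/2}$ yields the even formula. When $l$ is odd, $\eta_t(c)^l=\eta_t(c)$; the sum vanishes for $b=0$ and, after the substitution $c\mapsto -c/b$, equals $\eta_t(-b)G$ for $b\neq0$, so one extra $G$ appears and $G^{l+1}=\eta_t((-1)^{(l+1)/2})q^{(l+1)/2}$ again has rational modulus. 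I expect the only genuine obstacle to be the sign bookkeeping: one must verify that the accumulated powers of $-1$ inside $\eta_t$ reproduce the stated exponents $(-1)^{l/2}$ and $(-1)^{(l-1)/2}$ exactly. This is dispatched by observing that the argument of $\eta_t$ is only relevant modulo squares, so the exponents of $-1$ may be reduced modulo $2$; in the odd case $(-1)^{(l+3)/2}=(-1)^{(l-1)/2}$, and both cases then collapse to the claimed expressions.
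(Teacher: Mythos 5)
Your argument is correct, but note that the paper offers no proof of this lemma at all: it is quoted verbatim from Theorems 6.26 and 6.27 of Lidl--Niederreiter, so there is no internal proof to compare against. What you have written is the standard character-sum derivation, and it checks out: orthogonality of the canonical additive character splits off the main term $q^{l-1}$; diagonalization is legitimate because the substitution $X=YB$ is a bijection and changes $\det(f)$ only by $(\det B)^2$; each diagonal factor is evaluated by the paper's Lemma~\ref{lm: solu of simple quadra form}; the identity $G^2=\eta_t(-1)q$ follows from squaring that lemma (the power of $\sqrt{-1}$ reduces to $(-1)^{t((p-1)/2)^2}$, which agrees with $\eta_t(-1)$); and the final sign reductions $(-1)^{l/2}$ and $(-1)^{(l+3)/2}=(-1)^{(l-1)/2}$ are right since $\eta_t$ only sees its argument modulo squares. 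The one convention you lean on silently is $\eta_t(0)=0$ in the odd-$l$, $b=0$ case, where your sum over $c$ vanishes and the stated formula degenerates to $q^{l-1}$; that is the standard convention and worth one sentence. Compared with Lidl--Niederreiter's own treatment, which proves these counts by an elementary induction on the number of variables (peeling off two variables at a time), your Gauss-sum route is shorter and has the advantage of being self-contained within the paper's toolkit, since it reuses exactly the exponential-sum evaluation the authors already import as Lemma~\ref{lm: solu of simple quadra form}.
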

%

For convenience, we abbreviate the trace function $\mathrm{Tr}^m_1$
as $\mathrm{Tr}$ in the sequel. We will require the following lemma whose
proof can be found in \cite{coulter1998explicit,draper2007explicit,yu2014weight}.

\begin{lemma}\label{lm: exponentialsums}
Let $S(a)=\sum_{x\in
\mathbb{F}_{p^m}}\zeta_p^{\mathrm{Tr}(ax^{p^k+1})}$ and $d=\mathrm{gcd}(k,m)$.
Let $\upsilon_2(\cdot)$ denote
the 2-adic order function.
Then $Q(x)=\mathrm{Tr}(ax^{p^k+1})$ is a quadratic form and for any
$a\in \mathbb{F}^*_{p^m}$,\\
 \textcircled{1} If
$\upsilon_2(m)\leqslant \upsilon_2(k)$, then
 $\mathrm{rank}(Q(x))=m$ and
\begin{eqnarray*}
S(a)=\left\{\begin{array}{lll}~~\sqrt{(-1)^{\frac{p^d-1}{2}}}~p^{\frac{m}{2}},
&&\frac{p^m-1}{2}~~times,\\
-\sqrt{(-1)^{\frac{p^d-1}{2}}}~p^{\frac{m}{2}},
&&\frac{p^m-1}{2}~~times.\\
\end{array}
\right.
\end{eqnarray*}
\textcircled{2} If $\upsilon_2(m)=\upsilon_2(k)+1$, then
$\mathrm{rank}(Q(x))=m$ or $m-2d$ and
\begin{eqnarray*}
S(a)=\left\{\begin{array}{lll}-p^{\frac{m}{2}},
&&~~~\frac{p^d(p^m-1)}{p^d+1}~~times,\\
~~p^{\frac{m}{2}+d}, &&~~~\frac{p^m-1}{p^d+1}~~~~~~times.\\
\end{array}
\right.
\end{eqnarray*}
\textcircled{3} If $\upsilon_2(m)>\upsilon_2(k)+1$, then
$\mathrm{rank}(Q(x))=m$ or $m-2d$ and
\begin{eqnarray*}
S(a)=\left\{\begin{array}{lll}~~p^{\frac{m}{2}},
&&~~\frac{p^d(p^m-1)}{p^d+1}~~times,\\
-p^{\frac{m}{2}+d}, &&~~\frac{p^m-1}{p^d+1}~~~~~~times.\\
\end{array}
\right.
\end{eqnarray*}
\end{lemma}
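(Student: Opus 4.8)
The plan is to study $S(a)$ entirely through the geometry of the quadratic form $Q(x)=\mathrm{Tr}(ax^{p^k+1})$, in four stages: confirm it is a quadratic form, compute its rank from the radical, count how many $a$ realize each rank, and finally pin down the sign/phase of each value via Lemma~\ref{lm: solu of simple quadra form}.

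First I would note that $x^{p^k+1}=x\cdot x^{p^k}$ is the product of the identity with the $\mathbb{F}_p$-linear Frobenius $x\mapsto x^{p^k}$; composing with the $\mathbb{F}_p$-linear trace makes $Q$ homogeneous of degree $2$ in the coordinates of $x$, hence a quadratic form. Next I would compute its polar form $B(x,z)=Q(x+z)-Q(x)-Q(z)=\mathrm{Tr}\!\big(a(xz^{p^k}+x^{p^k}z)\big)$ and use invariance of the trace under Frobenius to rewrite it as $B(x,z)=\mathrm{Tr}\!\big(zL(x)\big)$ with $L(x)=ax^{p^k}+a^{p^{m-k}}x^{p^{m-k}}$. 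The space $W$ from Section~\ref{sec:Preli} is then $\ker L$; raising $L(x)=0$ to the $p^k$-th power turns it into the linearized equation $a^{p^k}x^{p^{2k}}+ax=0$, i.e.\ $x^{p^{2k}-1}=-a^{1-p^k}$ for $x\neq0$. A short argument shows the nonzero solutions, if any, form a single $\mathbb{F}_{p^{\gcd(2k,m)}}^{*}$-orbit, so $|W|\in\{1,\,p^{\gcd(2k,m)}\}$ and $\mathrm{rank}(Q)\in\{m,\,m-\gcd(2k,m)\}$. Comparing $2$-adic valuations gives $\gcd(2k,m)=d$ when $\upsilon_2(m)\leqslant\upsilon_2(k)$ and $\gcd(2k,m)=2d$ when $\upsilon_2(m)>\upsilon_2(k)$, which already separates case~\textcircled{1} from cases~\textcircled{2},\textcircled{3}.

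The decisive bookkeeping step is to count $N=\#\{(a,x)\in(\mathbb{F}_{p^m}^{*})^{2}: a^{p^k}x^{p^{2k}}+ax=0\}$ two ways. Summing over $a$ gives $N=\sum_a(|W_a|-1)=(p^{2d}-1)\cdot\#\{a:\mathrm{rank}\,m-2d\}$. Summing over $x$, for fixed $x\neq0$ the admissible $a$ solve $a^{p^k-1}=-x^{-(p^{2k}-1)}$; since $a\mapsto a^{p^k-1}$ has kernel $\mathbb{F}_{p^d}^{*}$ and image $\{y:y^{(p^m-1)/(p^d-1)}=1\}$, there are $p^d-1$ such $a$ when the right-hand side lies in the image and none otherwise. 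Because $x^{p^m-1}=1$, the membership test collapses to $(-1)^{(p^m-1)/(p^d-1)}=1$, i.e.\ to $m/d$ being even. Hence when $m/d$ is odd (equivalently $\upsilon_2(m)\leqslant\upsilon_2(k)$) we get $N=0$, so $W_a=\{0\}$ for every $a$ and the rank is identically $m$; when $m/d$ is even we get $N=(p^m-1)(p^d-1)$, whence exactly $\frac{p^m-1}{p^d+1}$ values of $a$ have rank $m-2d$ and the remaining $\frac{p^d(p^m-1)}{p^d+1}$ have rank $m$. This reproduces all the stated frequencies.

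Finally, for magnitudes and signs I would diagonalize a rank-$r$ form as $\sum_{i=1}^r a_iy_i^2$ and factor $S(a)=p^{m-r}\prod_{i=1}^r\big(\sum_{y\in\mathbb{F}_p}\zeta_p^{a_iy^2}\big)$, applying Lemma~\ref{lm: solu of simple quadra form} with $t=1$ to each factor; this at once yields $|S(a)|=p^{m-r/2}$, namely $p^{m/2}$ for rank $m$ and $p^{m/2+d}$ for rank $m-2d$, matching every case. The hard part is the phase, and I expect the main obstacle to be controlling $\eta_1(\det Q_a)$: it is only manifestly constant on cosets of the $(p^k+1)$-st powers, because $a\mapsto c^{p^k+1}a$ precomposes $Q$ with multiplication by $c$ and alters $\det$ only by the square $N_{\mathbb{F}_{p^m}/\mathbb{F}_p}(c)^2$. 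To finish I would evaluate $S(a)$ on convenient representatives—$a$ lying in the subfield $\mathbb{F}_{p^d}$ or $\mathbb{F}_{p^{2d}}$, where $x^{p^k+1}$ and the trace descend and Lemma~\ref{lm: solu of simple quadra form} or Lemma~\ref{lm: solution of quadra form} apply directly—and then reassemble the Gauss-sum phase $G^r$, with $G^2=(-1)^{(p-1)/2}p$, against $\eta_1(\det)$, tracking the parities of $p$, $m$, $d$ to land on $\sqrt{(-1)^{(p^d-1)/2}}$ in case~\textcircled{1} and on the uniform opposite signs of cases~\textcircled{2},\textcircled{3}. Managing this phase cleanly, rather than the rank or the counting, is where the real work lies.
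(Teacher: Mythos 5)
The paper never actually proves this lemma---the Remark that follows it just cites Corollary 7.6 of Draper--Hou and notes that the rank can be read off from $|S(a)|$---so any self-contained argument is necessarily a different route, and yours reverses the paper's logic by deriving the rank distribution first and the values second. The core of your argument is correct and complete as far as it goes: the polarization $B(x,z)=\mathrm{Tr}(zL(x))$ with $L(x)=ax^{p^k}+a^{p^{m-k}}x^{p^{m-k}}$, the reduction of $\ker L$ to $x^{p^{2k}-1}=-a^{1-p^k}$, the identification of $\gcd(2k,m)$ with $d$ or $2d$ according to $\upsilon_2(m)\leqslant\upsilon_2(k)$ or not, and above all the double count of pairs $(a,x)$ solving $a^{p^k}x^{p^{2k}}+ax=0$, which cleanly yields $|R_1|=\frac{p^m-1}{p^d+1}$ when $s=m/d$ is even and $R_1=\emptyset$ when $s$ is odd.

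Two genuine gaps remain, both on the sign side, and the sentence ``this reproduces all the stated frequencies'' overstates what the rank count delivers. (i) In case \textcircled{1} every $a$ has rank $m$, so the counting says nothing about how the two values $\pm\sqrt{(-1)^{(p^d-1)/2}}\,p^{m/2}$ split; you need the first moment $\sum_{a\in\mathbb{F}_{p^m}}S(a)=p^m\cdot\#\{x: x^{p^k+1}=0\}=p^m$, hence $\sum_{a\neq0}S(a)=0$, to force each sign to occur $\frac{p^m-1}{2}$ times. (ii) In cases \textcircled{2},\textcircled{3} you must show the sign is constant on the \emph{entire} rank-$m$ class, which is a union of $p^d$ cosets of the $(p^k+1)$-st powers, and your device of subfield representatives cannot reach every coset: the image of $\mathbb{F}_{p^{2d}}^{*}$ in the cyclic quotient of order $p^d+1$ is generated by $s/2\bmod (p^d+1)$, hence is a proper subgroup whenever $\gcd(s/2,p^d+1)>1$ (already for $p=3$, $k=1$, $m=4$). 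The same moment identity repairs this: here $m$ and $m-2d$ are even, so every $S(a)$ is real with $|S(a)|=p^{m/2}$ or $p^{m/2+d}$ according to the rank; writing $u$ for the number of rank-$m$ cosets carrying the sign $+$ and $\epsilon$ for the sign on the rank-$(m-2d)$ coset, $\sum_{a\neq0}S(a)=0$ gives $2u-p^d+\epsilon p^d=0$, whose only solutions are $(u,\epsilon)=(0,+1)$ and $(u,\epsilon)=(p^d,-1)$. Thus sign-constancy is automatic and a single explicit evaluation (or the cited Corollary 7.6 of Draper--Hou) decides between \textcircled{2} and \textcircled{3}. With (i) and (ii) supplied, your outline becomes a complete proof.
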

\begin{remark}
The value of $S(a)$ and its frequency can be easily obtained from
Corollary 7.6 of \cite{draper2007explicit} and the rank of $Q(x)$
can be deduced immediately from the value of $S(a)$. We mention that
Lemma \ref{lm: exponentialsums} plays an important role in
calculating the weight distributions of the cyclic codes $C_1$ and
$C_2$ in the sequel.

\end{remark}

%
%
%
%
%
For later use, we define
\begin{eqnarray}
R_i=\{a\in \mathbb{F}^*_{p^m}\big|~ \mathrm{rank}(Q(x))=m-2di\},
~~i\in \{0,1\}.
\end{eqnarray}

From Lemma \ref{lm: exponentialsums}, for $\upsilon_2(m)\leqslant
\upsilon_2(k)$, we have
\begin{eqnarray*}
S(a)=\sqrt{(-1)^{\frac{p^d-1}{2}}}~\theta_0p^{\frac{m}{2}},
~~\theta_0\in \{\pm1\},
\end{eqnarray*}
and for $\upsilon_2(m)\geqslant \upsilon_2(k)+1$ with $i\in
\{0,1\}$,
\begin{eqnarray*}
S(a)=\theta_ip^{\frac{m+2di}{2}}, ~~\theta_i\in \{\pm1\}.
\end{eqnarray*}
Two subsets $R_{i,j}$ of $R_i$ for $i\in \{0,1\}$ are defined as
\begin{eqnarray}\label{Rij}
 R_{i,j}=\{a\in
R_i\big|~\theta_i=j\},~~ j=\pm1.
\end{eqnarray}
Then, the value of each $|R_{i}|$ and $|R_{i,j}|$ can be computed by  Lemma \ref{lm: exponentialsums}.

Let $r=\mathrm{rank}(Q(x))$. By making a nonlinear substitution to
$Q(x)$ and using Lemma \ref{lm: solu of simple quadra form} we have
\begin{eqnarray}\label{eq:S(a)}
S(a)&=&\sum_{x\in
\mathbb{F}_{p^m}}\zeta_p^{\mathrm{Tr}(ax^{p^k+1})}
=\sum_{x_1,\cdots,x_m\in
\mathbb{F}_{p}}\zeta_p^{a_1x^2_1+\cdots+a_rx^2_r}\nonumber \\
&=&\eta\left(\prod^r_{i=1}a_i\right)(\sqrt{-1})^{\frac{r}{4}(p-1)^2}p^{\frac{r}{2}}p^{m-r}\nonumber \\
&=&\eta\left(\prod^r_{i=1}a_i\right)(\sqrt{-1})^{\frac{r}{4}(p-1)^2}p^{m-\frac{r}{2}},
\end{eqnarray}
where $a_i\in \mathbb{F}^*_p$ for $i=1,\cdots,r$ and $\eta$ is the
quadratic character over $\mathbb{F}_p$.

In the sequel, we define $\Delta_i=\prod_{j=1}^{m-2di}a_j$ for $i\in
\{0,1\}$. The following property will be needed to determine the
weight distribution of cyclic codes.
\begin{lemma}\label{lm:type of qua form}
With notations as before. For $i\in \{0,1\}$ and $j=\pm1$, we have
\begin{eqnarray}\label{eq:eta2}
\eta\left((-1)^{[\frac{m-2di}{2}]}\Delta_i\right)=j~~ occurring~~
|R_{i,j}|~~ times,
\end{eqnarray}
where $[x]$ denotes the largest integer that is less than or equal
to $x$.
\end{lemma}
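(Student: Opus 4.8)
The plan is to read the sign $\theta_i$ off from the two available expressions for $S(a)$ and to identify it with the quadratic character $\eta\bigl((-1)^{[\frac{m-2di}{2}]}\Delta_i\bigr)$. Once this identification is established, the statement is immediate from the definition \eqref{Rij}: by construction $R_{i,j}$ is exactly the set of $a\in R_i$ on which $\theta_i=j$, and it has $|R_{i,j}|$ elements, so $\eta\bigl((-1)^{[\frac{m-2di}{2}]}\Delta_i\bigr)=j$ occurs precisely $|R_{i,j}|$ times.

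First I would fix $a\in R_i$, so that $r=\mathrm{rank}(Q(x))=m-2di$ and $\prod_{j=1}^{r}a_j=\Delta_i$. Substituting $r=m-2di$ into \eqref{eq:S(a)} gives
\[
S(a)=\eta(\Delta_i)\,(\sqrt{-1})^{\frac{r}{4}(p-1)^2}\,p^{\frac{m+2di}{2}}.
\]
The technical core is to simplify the root-of-unity factor. Writing $u=\frac{p-1}{2}$, one has $(\sqrt{-1})^{\frac{r}{4}(p-1)^2}=(\sqrt{-1})^{ru^2}$, and I would reduce the exponent modulo $4$ using the elementary facts that $\eta(-1)=(-1)^{u}$, that an odd square is $\equiv 1\pmod 4$, and that $u(u-1)$ is always even. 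I would then compare with the value of $S(a)$ from Lemma \ref{lm: exponentialsums} and solve for $\theta_i$. In cases \textcircled{2} and \textcircled{3} the hypothesis $\upsilon_2(m)\geqslant\upsilon_2(k)+1$ forces $s=m/d$, hence $m$ and hence $r$, to be even; there $S(a)=\theta_i p^{\frac{m+2di}{2}}$, so $\theta_i=\eta(\Delta_i)(\sqrt{-1})^{ru^2}$, and the reduction collapses to $(\sqrt{-1})^{ru^2}=(-1)^{(r/2)u^2}=(-1)^{(r/2)u}=\eta\bigl((-1)^{[r/2]}\bigr)$, since $(r/2)(u^2-u)$ is even and $[r/2]=r/2$.

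The subtle case is \textcircled{1}, where $\upsilon_2(m)\leqslant\upsilon_2(k)$ forces $s$ to be odd, so $m\equiv d\pmod 2$, and $S(a)=\theta_0\sqrt{(-1)^{\frac{p^d-1}{2}}}\,p^{m/2}$ carries the extra factor $\epsilon=\sqrt{(-1)^{(p^d-1)/2}}\in\{1,\sqrt{-1}\}$. The identity to verify is $(\sqrt{-1})^{mu^2}=\epsilon\,(-1)^{u[m/2]}$, which I would check by splitting on $p\bmod 4$ (equivalently the parity of $u$) and on the common parity of $m$ and $d$; the happy cancellations rely on $s$ being odd in this regime and on odd squares reducing to $1$ modulo $4$. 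In every branch this yields $\theta_i=\eta\bigl((-1)^{[\frac{m-2di}{2}]}\Delta_i\bigr)$, and combining with \eqref{Rij} gives the claim. I expect the main obstacle to be exactly the bookkeeping of this last identity: keeping the floor function, the powers of $\sqrt{-1}$, and the $\epsilon$-factor mutually consistent across the subcases $p\equiv 1,3\pmod 4$ and $m$ (hence $d$) even or odd.
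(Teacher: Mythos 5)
Your proposal is correct and follows essentially the same route as the paper: read $\theta_i$ off by equating the diagonalized expression \eqref{eq:S(a)} for $S(a)$ with its explicit value from Lemma \ref{lm: exponentialsums}, then verify $\theta_i=\eta\bigl((-1)^{[\frac{m-2di}{2}]}\Delta_i\bigr)$ by a case split on $p\bmod 4$ and the parity of $m$ (the paper writes out only the odd-$s$ case ①, with subcases matching $(\sqrt{-1})^{\frac{m}{4}(p-1)^2}$ against $\sqrt{(-1)^{(p^d-1)/2}}$, and leaves the even-$r$ cases as "similar"). Your sketch actually covers the remaining cases ② and ③ explicitly, and the key identities you cite (odd squares are $1\bmod 4$, $u(u-1)$ even, $\eta(-1)=(-1)^{u}$) do close all the branches, so there is no gap.
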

\begin{proof}
We only give the proof of the case that $\upsilon_2(m)\leqslant
\upsilon_2(k)$ since the other cases can be proved in a similar way.

We assume that $\upsilon_2(m)\leqslant \upsilon_2(k)$ for the rest of the
proof. Thus, we only need to prove the desired conclusion in the case of
$i=0$ since $r=m$. The discussion in this case is divided
into the following subcases.

 If $\upsilon_2(m)\geqslant1$, then
\begin{eqnarray*}
\eta\left((-1)^{[\frac{m}{2}]}\Delta_0\right)=\eta\left((-1)^{\frac{m}{2}}\right)\eta(\Delta_0)
=(\sqrt{-1})^{\frac{m}{4}(p-1)^2}\eta(\Delta_0),
\end{eqnarray*}
which is equal to the coefficient of $p^{m\!-\!\frac{r}{2}}$ in Equation
\eqref{eq:S(a)} for $r=m$. Since $\sqrt{(-1)^{\frac{p^d-1}{2}}}=1$, the desired assertion holds for this subcase by Lemma \ref{lm: exponentialsums}.

 If $\upsilon_2(m)=0$, then
 \begin{eqnarray}\label{eq:1 or -1}
 (-1)^{[\frac{m}{2}]}=(-1)^{\frac{m-1}{2}}=\left\{\begin{array}{lll}~~1,&&~~if~~
 m\equiv1\mod4,\\
 -1,&&~~if~~
 m\equiv3\mod4.\\
 \end{array}
\right.
\end{eqnarray}

Recall that $p$ is an odd prime. If $p\equiv1\mod4$, then $-1$ is a quadratic
residue over $\mathbb{F}_p$. Therefore,
\begin{eqnarray*}
\eta\left((-1)^{[\frac{m}{2}]}\Delta_0\right)
=\eta(\Delta_0)=(\sqrt{-1})^{\frac{m}{4}(p-1)^2}\eta(\Delta_0),
\end{eqnarray*}
which is also equal to the coefficient of $p^{\frac{m}{2}}$ in Equation
\eqref{eq:S(a)}. Note that $\sqrt{(-1)^{\frac{p^d-1}{2}}}=1$. Hence, the
desired assertion holds for this subcase.

If $p\equiv3\mod4$, then $-1$ is a quadratic nonresidue over
$\mathbb{F}_p$. By \eqref{eq:1 or -1}, we have
\begin{eqnarray*}
\eta\left((-1)^{[\frac{m}{2}]}\Delta_0\right)
=\left\{\begin{array}{lll}~~\eta(\Delta_0),&&~~if~~
 m\equiv1\mod4,\\
 -\eta(\Delta_0),&&~~if~~
 m\equiv3\mod4.\\
 \end{array}
\right.
\end{eqnarray*}
Note that $(\sqrt{-1})^{\frac{m}{4}(p-1)^2}$ equals to $\sqrt{-1}$ if $ m\equiv1\mod4$, and $-\sqrt{-1}$ if $ m\equiv3\mod4$. This implies that $\eta\left((-1)^{[\frac{m}{2}]}\Delta_0\right)$ is
equal to the coefficient of $\sqrt{-1}p^{\frac{m}{2}}$. Since $\sqrt{(-1)^{\frac{p^d-1}{2}}}=\sqrt{-1}$,
 the
desired assertion holds for this subcase.   \hfill\space$\qed$
\end{proof}

\section{The weight distribution of the
code $C_1$}\label{sec:C1}

We now focus on the weight distribution of the code $C_1$ as
described in Section \ref{sec:intro}. It follows from Delsarte's
Theorem \cite{delsarte1975subfield} that
$$C_1=\{\mathsf{c}_1(a,b):a,b\in \mathbb{F}_{p^m}\},$$ where $\mathsf{c}_1(a,b)=(\text{Tr}(ax^{p^k+1}+bx))_{x\in
\mathbb{F}^*_{p^m}}$.

Let $N_{a,b}(0)$ be the number of solutions $x\in \mathbb{F}_{p^m}$
of the equation
\begin{eqnarray}\label{N1:0}
 \text{Tr}(ax^{p^k+1}+bx)=0,
\end{eqnarray}
as $(a,b)$ runs through $\mathbb{F}^2_{p^m}$. For a given basis
$\{\alpha_1,\alpha_2,\dots,\alpha_m\}$ of $\mathbb{F}_{p^m}$ over
$\mathbb{F}_{p}$, each $x\in \mathbb{F}_{p^m}$ can be uniquely
expressed as $x=\sum^m_{i=1}x_i\alpha_i$ with
$x_i\in\mathbb{F}_{p}$. Therefore, by making a nonsingular linear
substitution as introduced in Section \ref{sec:Preli}, Equation \eqref{N1:0}
becomes
\begin{eqnarray}\label{N1:0 2}
 \sum^m_{i=1}a_ix^2_i+\sum^m_{i=1}b_ix_i=0,
\end{eqnarray} where $a_i,b_i\in\mathbb{F}_{p}$. Hence, $N_{a,b}(0)$ also represents the number
of $(x_1,x_2,\dots,x_m)\in \mathbb{F}^m_{p}$ satisfying \eqref{N1:0 2}.

Recall that $d=\mathrm{gcd}(k,m)$ and $s={m}/{d}$. Note that $s$ is odd if and only
if $\upsilon_2(m)\leqslant \upsilon_2(k)$, and $s$ is even
if and only if $\upsilon_2(m)\geqslant \upsilon_2(k)+1$. For the case of
$s$ being odd, the references \cite{feng2007value,lichao2009covering,yuan2006weight} have given the weight
distribution of $C_1$ independently. In the following, we establish the weight distribution of $C_1$ for
even $s$.
\begin{theorem}\label{thm:code a bx} With notation given
before. If $\upsilon_2(m)\geqslant \upsilon_2(k)+1$ and $m\neq2k$, then $C_1$ is a cyclic code over $\mathbb{F}_p$
with parameters $[p^m-1,2m]$ and \\
\textcircled{1} If $\upsilon_2(m)=\upsilon_2(k)+1$, the weight
distribution of $C_1$ is given as follows:
\begin{eqnarray}\label{W1:31 m not equal 2k}
\left\{\begin{array}{l}A_0=1,\\A_{(p-1)p^{m-1}}=(p^m-1)(1+p^{m-d}-p^{m-2d}), \\
  A_{(p-1)(p^{m-1}+p^{\frac{m-2}{2}})}=(p^{m-1}-(p-1)p^{\frac{m-2}{2}})\frac{p^d(p^m-1)}{p^d+1},  \\
  A_{(p-1)p^{m-1}-p^{\frac{m-2}{2}}}=(p-1)(p^{m-1}+p^{\frac{m-2}{2}})\frac{p^d(p^m-1)}{p^d+1},  \\
 A_{(p-1)(p^{m-1}-p^{\frac{m+2d-2}{2}})}=(p^{m-2d-1}+(p-1)p^{\frac{m-2d-2}{2}})\frac{p^m-1}{p^d+1},   \\
  A_{(p-1)p^{m-1}+p^{\frac{m+2d-2}{2}}}=(p-1)(p^{m-2d-1}-p^{\frac{m-2d-2}{2}})\frac{p^m-1}{p^d+1}.\\
\end{array}
\right.
\end{eqnarray}
\textcircled{2} If $\upsilon_2(m)>\upsilon_2(k)+1$, the weight
distribution of $C_1$ is given as follows:
\begin{eqnarray}\label{W1:4 v2(m)>v2(k)+1}
\left\{\begin{array}{l}A_0=1,\\A_{(p-1)p^{m-1}}=(p^m-1)(1+p^{m-d}-p^{m-2d}), \\
   A_{(p-1)(p^{m-1}-p^{\frac{m-2}{2}})}=(p^{m-1}+(p-1)p^{\frac{m-2}{2}})\frac{p^d(p^m-1)}{p^d+1},  \\
   A_{(p-1)p^{m-1}+p^{\frac{m-2}{2}}}=(p-1)(p^{m-1}-p^{\frac{m-2}{2}})\frac{p^d(p^m-1)}{p^d+1},  \\
   A_{(p-1)(p^{m-1}+p^{\frac{m+2d-2}{2}})}=(p^{m-2d-1}-(p-1)p^{\frac{m-2d-2}{2}})\frac{p^m-1}{p^d+1},   \\
   A_{(p-1)p^{m-1}-p^{\frac{m+2d-2}{2}}}=(p-1)(p^{m-2d-1}+p^{\frac{m-2d-2}{2}})\frac{p^m-1}{p^d+1}.
\end{array}
\right.
\end{eqnarray}
\end{theorem}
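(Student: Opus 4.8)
The plan is to read every nonzero weight of $C_1$ off the identity $\mathrm{wt}(\mathsf{c}_1(a,b))=p^m-N_{a,b}(0)$, which holds because $x=0$ always solves \eqref{N1:0} while the $n=p^m-1$ coordinates are indexed by $x\in\mathbb{F}^*_{p^m}$. Since $\dim C_1=2m$, the map $(a,b)\mapsto\mathsf{c}_1(a,b)$ is a bijection from $\mathbb{F}^2_{p^m}$ onto $C_1$, so it suffices to count, for each target weight, the pairs $(a,b)$ producing it. First I would dispose of $a=0$: here $N_{0,0}(0)=p^m$ gives the zero codeword, while for $b\neq0$ the form $\text{Tr}(bx)$ is balanced, so $N_{0,b}(0)=p^{m-1}$, the weight is $(p-1)p^{m-1}$, and this contributes $p^m-1$ codewords. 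For $a\neq0$ I would start from \eqref{N1:0 2}, which after the diagonalizing substitution reads $\sum_{i=1}^{r}a_ix_i^2+\sum_{i=1}^{m}b_ix_i=0$ with $a_i\in\mathbb{F}^*_p$ and $r=\mathrm{rank}(Q)=m-2di$, $i\in\{0,1\}$; note $r$ is even throughout this theorem since $\upsilon_2(m)\geqslant1$.

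Completing the square in $x_1,\dots,x_r$ (legitimate as $p$ is odd) turns the equation into $\sum_{i=1}^{r}a_iy_i^2+\sum_{i=r+1}^{m}b_ix_i=c$ with $c=\sum_{i=1}^{r}b_i^2/(4a_i)$. If $(b_{r+1},\dots,b_m)\neq0$ (Case A), then for each of the $p^r$ choices of $(y_1,\dots,y_r)$ the residual linear equation has $p^{m-r-1}$ solutions, so $N_{a,b}(0)=p^{m-1}$ and the weight is $(p-1)p^{m-1}$. If $(b_{r+1},\dots,b_m)=0$ (Case B), the variables $x_{r+1},\dots,x_m$ are free and Lemma \ref{lm: solution of quadra form} (with $t=1$, $\eta_1=\eta$) gives
\begin{eqnarray*}
\mathrm{wt}(\mathsf{c}_1(a,b))=(p-1)p^{m-1}-\upsilon(c)\,p^{m-1-r/2}\,\eta\!\left((-1)^{r/2}\Delta_i\right).
\end{eqnarray*}
Writing $\epsilon_i=\eta((-1)^{r/2}\Delta_i)\in\{\pm1\}$ and using $\upsilon(0)=p-1$, $\upsilon(c)=-1$ for $c\neq0$, Case B yields weight $(p-1)(p^{m-1}-\epsilon_i p^{m-1-r/2})$ when $c=0$ and $(p-1)p^{m-1}+\epsilon_i p^{m-1-r/2}$ when $c\neq0$. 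With $r=m$ and $r=m-2d$ these reproduce the four nontrivial weights in \eqref{W1:31 m not equal 2k} and \eqref{W1:4 v2(m)>v2(k)+1}.

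It remains to count frequencies. For fixed $a\neq0$ of rank $r$, as $b$ ranges over $\mathbb{F}_{p^m}$ the vector $(b_1,\dots,b_m)$ ranges bijectively over $\mathbb{F}^m_p$, so Case A occurs for $p^m-p^r$ values of $b$ (empty when $r=m$). Within Case B the assignment $b\mapsto c$ is itself the quadratic form $\sum_{i=1}^{r}b_i^2/(4a_i)$ in $(b_1,\dots,b_r)$; a second application of Lemma \ref{lm: solution of quadra form}, using that $\eta\!\left(\prod_i(4a_i)^{-1}\right)=\eta(\Delta_i)$ because $4$ is a square in $\mathbb{F}_p$, shows $c=0$ for $p^{r-1}+(p-1)\epsilon_i p^{(r-2)/2}$ values of $b$ and $c\neq0$ for the remaining $(p-1)(p^{r-1}-\epsilon_i p^{(r-2)/2})$. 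The key point is that the very same sign $\epsilon_i$ governs both the weight and this count. By Lemma \ref{lm:type of qua form} the condition $\epsilon_i=j$ selects the set $R_{i,j}$ of \eqref{Rij}, and Lemma \ref{lm: exponentialsums} gives $\epsilon_i=\theta_i$ with $\theta_0=-1,\theta_1=+1$ in part \textcircled{1} of the theorem and $\theta_0=+1,\theta_1=-1$ in part \textcircled{2}; the frequencies are $|R_0|=p^d(p^m-1)/(p^d+1)$ and $|R_1|=(p^m-1)/(p^d+1)$.

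Multiplying each per-$a$ count by $|R_i|$ and adding the $a=0$ contributions then produces the stated $A_i$; for instance the weight $(p-1)p^{m-1}$ collects the $p^m-1$ codewords with $a=0$ together with the $|R_1|(p^m-p^{m-2d})$ codewords arising from Case A at $r=m-2d$, and $(p^m-p^{m-2d})/(p^d+1)=p^{m-d}-p^{m-2d}$ yields the factor $1+p^{m-d}-p^{m-2d}$. Parts \textcircled{1} and \textcircled{2} differ only through the sign $\theta_i$ of Lemma \ref{lm: exponentialsums}, which swaps the $c=0$ and $c\neq0$ weights. I expect the main obstacle to be bookkeeping rather than any single deep step: one must keep $\epsilon_i=\theta_i$ consistent between the weight value and the count of admissible $b$'s, treat the degenerate rank $r=m-2d$ (nontrivial radical, Case A present) separately from $r=m$, and check that the resulting frequencies simplify and sum to $p^{2m}$.
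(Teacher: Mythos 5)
Your proposal is correct and follows essentially the same route as the paper: the same reduction to counting $N_{a,b}(0)$, the same split into $a=0$, $a\in R_0$, $a\in R_1$ with completion of the square, the same double application of Lemma \ref{lm: solution of quadra form} (once for the solution count, once for the distribution of $c=\sum b_i^2/(4a_i)$ over $b$), and the same use of Lemmas \ref{lm: exponentialsums} and \ref{lm:type of qua form} to pin down the sign $\eta((-1)^{r/2}\Delta_i)$ and the frequencies $|R_{i,j}|$. The only difference is presentational: you treat the two ranks uniformly via the parameter $i$ rather than writing out Cases 2 and 3 separately in each of the two subcases.
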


\begin{proof}
From the definition of $C_1$, we know that $C_1$ has length $p^m-1$ and dimension $2m$. The Hamming weight of every codeword
$\mathsf{c}_1(a,b)$ can be determined by
\begin{eqnarray}\label{W1:1234}
\omega t(\mathsf{c}_1(a,b))&=&p^m-1-\#\{x\in
\mathbb{F}^*_{p^m}\big|~\text{Tr}(ax^{p^k+1}+bx)=0\}\nonumber\\
&=&p^m-\#\{x\in
\mathbb{F}_{p^m}\big|~\text{Tr}(ax^{p^k+1}+bx)=0\}\nonumber\\
&=&p^m-N_{a,b}(0).
\end{eqnarray}
It suffices to study the value distribution of $N_{a,b}(0)$. So,
we calculate the weight distribution of the code $C_1$ in the following
cases.

\textcircled{1} $\upsilon_2(m)=\upsilon_2(k)+1$ and $m\neq2k$.

 The value of $N_{a,b}(0)$
 will be calculated according to the choice of the parameter $a$.

%

\emph{Case 1:} $a=0$. In this case, if $b=0$ then $N_{a,b}(0)=p^m$
occurring only once, and if $b \neq 0$ then
$N_{a,b}(0)=p^{m-1}$ occurring $p^m-1$ times.

\emph{Case 2:} $a\in R_0$. In this case, rank$(Q(x))=m$ and consequently every coefficient $a_i$
in \eqref{N1:0 2} is nonzero.

For $1\leqslant i\leqslant m,$ let $x_i=y_i-\frac{b_i}{2a_i},$ then
\eqref{N1:0 2} is equivalent to $\sum^m_{i=1}
a_iy^2_i=\sum ^m_{i=1}\frac{b^2_i}{4a_i}$. It then follows from Lemma
\ref{lm: solution of quadra form} that
\begin{eqnarray}\label{eq13:N0}N_{a,b}(0)=p^{m-1}+\upsilon\left(\sum^m_{i=1}\frac{b^2_i}{4a_i}\right)p^{\frac{m-2}{2}}\eta((-1)^{\frac
m2}\Delta_0).
\end{eqnarray}

Notice that the tuple $(b_1,\dots,b_m)$ runs through
$\mathbb{F}^m_{p}$ as $b$ runs through $\mathbb{F}_{p^m}$. We can
regard $\sum ^m_{i=1}\frac{b^2_i}{4a_i}$ as a quadratic form
in $m$ variables $b_i$ for $1\leqslant i\leqslant m$.
Again by Lemma \ref{lm: solution of quadra form}, as $b$ runs through $\mathbb{F}_{p^m}$, we obtain
\begin{equation}\label{eq13:bi}
    \sum^m_{i=1}\frac{b^2_i}{4a_i}=\beta~~ occurring~~
p^{m-1}+\upsilon(\beta)p^{\frac{m-2}{2}}\eta((-1)^{\frac
m2}\Delta_0)~~times,
\end{equation}
for each $\beta\in \mathbb{F}_p$, since $\eta((4^m\Delta_0)^{-1})=\eta(\Delta_0)$.

From Lemmas \ref{lm:
exponentialsums} and \ref{lm:type of qua form}, we have
$\eta((-1)^{\frac m2}\Delta_0)=-1$ in this case.

Therefore, by \eqref{eq13:N0} and \eqref{eq13:bi}, we find that
\begin{eqnarray*}
N_{a,b}(0)=\left\{\begin{array}{l}\!p^{m-1}\!-\!(p\!-\!1)p^{\frac{m-2}{2}}\\
~~~~~~~~~occurring~~(p^{m-1}\!-\!(p\!-\!1)p^{\frac{m-2}{2}})
|R_{0,-1}|~~times,\\
\!p^{m-1}\!+\!p^{\frac{m-2}{2}}\\
~~~~~~~~~occurring~~(p-1)(p^{m-1}\!+\!p^{\frac{m-2}{2}})|R_{0,-1}|~times.\\
\end{array}
\right.
\end{eqnarray*}

\emph{Case 3:} $a\in R_1$. In this case, rank$(Q(x))=m-2d$ by Lemma
\ref{lm: exponentialsums}. And consequently we can assume that the
coefficients in \eqref{N1:0 2} satisfy
$\prod^{m-2d}_{i=1}a_i\neq0$ and $a_i=0$ for
$m-2d<i\leqslant m$. Then \eqref{N1:0 2} is equivalent to

$$\sum^{m-2d}_{i=1}a_ix^2_i+\sum^{m}_{i=1}b_ix_i=0.$$

If there exists some $b_i\neq0$ for $m-2d<i\leqslant m,$ we can
assume without loss of generality that $b_m\neq0$. Then
$N_{a,b}(0)=p^{m-1}$, since we can substitute arbitrary elements of
$\mathbb{F}_p$ for $x_1,\cdots,x_{m-1}$ and the value of $x_m$ is
then uniquely determined. Furthermore, there are exactly
$p^m-p^{m-2d}$ choices for $b$ such that there is at least one
$b_i\neq0$ for $m-2d<i\leqslant m,$ as $b$ runs through
$\mathbb{F}_{p^m}$.

 If $b_i=0$ for all  $m-2d<i\leqslant m$,
then the substitution $x_i=y_i-\frac{b_i}{2a_i}$ for $1\leqslant
i\leqslant m-2d$ yields
$$\sum^{m-2d}_{i=1}a_iy^2_i=\sum^{m-2d}_{i=1}\frac{b^2_i}{4a_i}.$$
Notice that $m-2d$
is even. By Lemmas \ref{lm: solution of quadra form}, \ref{lm:
exponentialsums} and \ref{lm:type of qua form}, we obtain
\begin{eqnarray*}
N_{a,b}(0)&=&p^{2d}\left(p^{m-2d-1}+\upsilon\left(\sum^{m-2d}_{i=1}\frac{b^2_i}{4a_i}\right)p^{\frac{m-2d-2}{2}}\eta((-1)^{\frac
{m-2d}2}\Delta_1)\right)\\
&=&p^{m-1}+\upsilon\left(\sum^{m-2d}_{i=1}\frac{b^2_i}{4a_i}\right)p^{\frac{m+2d-2}{2}}\\
&=&\left\{\begin{array}{l}p^{m-1}+(p-1)p^{\frac{m+2d-2}{2}}~\\
~~~~~~~~~occurring~~(p^{m-2d-1}\!+\!(p\!-\!1)p^{\frac{m-2d-2}{2}})|R_{1,1}|~~times,\\
p^{m-1}-p^{\frac{m+2d-2}{2}}\\
~~~~~~~~~occurring~~(p-1)(p^{m-2d-1}\!-\!p^{\frac{m-2d-2}{2}})|R_{1,1}|~~times,\\
\end{array}
\right.
\end{eqnarray*}
since in this case, $\eta((-1)^{\frac {m-2d}2}\Delta_1)=1$.

By the discussion above, we will get
the result for case $\upsilon_2(m)=\upsilon_2(k)+1$ and $m\neq2k$
described in \eqref{W1:31 m not equal 2k}.

Here we only give the frequencies of the codewords with weight
$(p-1)p^{m-1}$ and $(p-1)(p^{m-1}+p^{\frac{m-2}{2}})$. Other cases
can be proved in a similar manner.

The weight of $\mathsf{c}_1(a,b)$
is equal to $(p-1)p^{m-1}$ if and only if $N_{a,b}(0)=p^{m-1}$.
According to the above analysis, the frequency is
\begin{eqnarray*}p^m&-&1+(p^m-p^{m-2d})|R_{1,1}|\\
&=&p^m-1+(p^m-p^{m-2d})\frac{p^m-1}{p^d+1}\\
&=&(p^m-1)(1+p^{m-d}-p^{m-2d}).\end{eqnarray*}

 The weight of $\mathsf{c}_1(a,b)$ is equal to
$(p-1)(p^{m-1}+p^{\frac{m-2}{2}})$ if and only if
$N_{a,b}(0)=p^{m-1}-(p-1)p^{\frac {m-2}2}$. The frequency is equal
to
$$(p^{m-1}-(p-1)p^{\frac{m-2}{2}})|R_{0,-1}|=(p^{m-1}-(p-1)p^{\frac{m-2}{2}})\frac{p^d(p^m-1)}{p^d+1}.$$


\textcircled{2}$\upsilon_2(m)>\upsilon_2(k)+1$.

 The value of $N_{a,b}(0)$
 will be computed by distinguishing among the following cases.

%

\emph{Case 1:} $a=0$. In this case, if $b=0$ then $N_{a,b}(0)=p^m$,
and this value occurs only once, and if $b \neq 0$ then
$N_{a,b}(0)=p^{m-1}$, and this value occurs $p^m-1$ times.

\emph{Case 2:} $a\in R_0$. In this case, rank$(Q(x))=m$ by Lemma
\ref{lm: exponentialsums} and consequently every coefficient $a_i$
in \eqref{N1:0 2} is nonzero.

For $1\leqslant i\leqslant m,$ let $x_i=y_i-\frac{b_i}{2a_i},$ then
\eqref{N1:0 2} is equivalent to $\sum^m_{i=1}
a_iy^2_i=\sum ^m_{i=1}\frac{b^2_i}{4a_i}$. According to Lemma
\ref{lm: solution of quadra form}, we have
\begin{eqnarray}\label{eq14:N0}N_{a,b}(0)=p^{m-1}+\upsilon\left(\sum^m_{i=1}\frac{b^2_i}{4a_i}\right)p^{\frac{m-2}{2}}\eta((-1)^{\frac
m2}\Delta_0).
\end{eqnarray}
Note that $\sum ^m_{i=1}\frac{b^2_i}{4a_i}$ can be regarded as a quadratic form
in $m$ variables $b_i$ for $1\leqslant i\leqslant m$.
Again by Lemma \ref{lm: solution of quadra form}, as $b$ runs through $\mathbb{F}_{p^m}$, we obtain
\begin{equation}\label{eq14:bi}\sum^m_{i=1}\frac{b^2_i}{4a_i}=\beta~~ occurring~~
p^{m-1}+\upsilon(\beta)p^{\frac{m-2}{2}}\eta((-1)^{\frac
m2}\Delta_0)~~ times,
\end{equation} for every $\beta\in \mathbb{F}_p$.

 By Lemmas \ref{lm:
exponentialsums} and \ref{lm:type of qua form}, we have
$\eta((-1)^{\frac m2}\Delta_0)=1$ in this case.

Therefore, combining \eqref{eq14:N0} and \eqref{eq14:bi} gives
\begin{eqnarray*}
N_{a,b}(0)=\left\{\begin{array}{l}p^{m-1}+(p-1)p^{\frac{m-2}{2}}\\
~~~~~~~~~occurring~(p^{m-1}+(p-1)p^{\frac{m-2}{2}})
|R_{0,1}|~~times,\\
p^{m-1}-p^{\frac{m-2}{2}}\\
~~~~~~~~~occurring~~(p-1)(p^{m-1}-p^{\frac{m-2}{2}})|R_{0,1}|~~times.\\
\end{array}
\right.
\end{eqnarray*}

\emph{Case 3:} $a\in R_1$. In this case, rank$(Q(x))=m-2d$ by Lemma
\ref{lm: exponentialsums}. Similarly, suppose that the coefficients
in \eqref{N1:0 2} satisfy $\prod^{m-2d}_{i=1}a_i\neq0$
and $a_i=0$ for $m-2d<i\leqslant m$. Then \eqref{N1:0 2} is
equivalent to

$$\sum^{m-2d}_{i=1}a_ix^2_i+\sum^{m}_{i=1}b_ix_i=0.$$

If there exists some $b_i\neq0$ for $m-2d<i\leqslant m,$ then
$N_{a,b}(0)=p^{m-1}$ and there are exactly $p^m-p^{m-2d}$ choices
for $b$ such that there is at least one $b_i\neq0$ for
$m-2d<i\leqslant m,$ as $b$ runs through $\mathbb{F}_{p^m}$.

If $b_i=0$ for all  $m-2d<i\leqslant m$, then the substitution
$x_i=y_i-\frac{b_i}{2a_i}$ for $1\leqslant i\leqslant m-2d$ yields
$$\sum^{m-2d}_{i=1}a_iy^2_i=\sum^{m-2d}_{i=1}\frac{b^2_i}{4a_i}.$$
It then follows from Lemmas \ref{lm: solution of quadra form}, \ref{lm:
exponentialsums} and \ref{lm:type of qua form} that
\begin{eqnarray*}
N_{a,b}(0)&=&p^{2d}\left(p^{m-2d-1}+\upsilon\left(\sum^{m-2d}_{i=1}\frac{b^2_i}{4a_i}\right)p^{\frac{m-2d-2}{2}}\eta((-1)^{\frac
{m-2d}2}\Delta_1)\right)\\
&=&p^{m-1}-\upsilon\left(\sum^{m-2d}_{i=1}\frac{b^2_i}{4a_i}\right)p^{\frac{m+2d-2}{2}}\\
&=&\left\{\begin{array}{l}p^{m-1}-(p-1)p^{\frac{m+2d-2}{2}}\\
~~~~~~~~~occurring~~(p^{m-2d-1}\!-\!(p\!-\!1)p^{\frac{m-2d-2}{2}})|R_{1,-1}|~~times,\\
p^{m-1}+p^{\frac{m+2d-2}{2}}\\
~~~~~~~~~occurring~~(p\!-\!1)(p^{m-2d-1}\!+\!p^{\frac{m-2d-2}{2}})|R_{1,-1}|~~times,\\
\end{array}
\right.
\end{eqnarray*}
since $\eta((-1)^{\frac {m-2d}2}\Delta_1)=-1$.

Combining all above cases and using Equation \eqref{W1:1234}, we will get
the result for case $\upsilon_2(m)>\upsilon_2(k)+1$ described in
\eqref{W1:4 v2(m)>v2(k)+1}.

Here we give the frequencies of the codewords with weight
$(p-1)p^{m-1}$ and $(p-1)(p^{m-1}-p^{\frac{m-2}{2}})$. Other cases
can be obtained in a similar manner.

The weight of $\mathsf{c}_1(a,b)$
is equal to $(p-1)p^{m-1}$ if and only if $N_{a,b}(0)=p^{m-1}$.
By the above argument, we see that the frequency is
\begin{eqnarray*}p^m&-&1+(p^m-p^{m-2d})|R_{1,-1}|\\
&=&p^m-1+(p^m-p^{m-2d})\frac{p^m-1}{p^d+1}\\
&=&(p^m-1)(1+p^{m-d}-p^{m-2d}).\end{eqnarray*}

 The weight of
$\mathsf{c}_1(a,b)$ is equal to $(p-1)(p^{m-1}-p^{\frac{m-2}{2}})$
if and only if $N_{a,b}(0)=p^{m-1}+(p-1)p^{\frac {m-2}2}$. The
frequency is equal to
$$(p^{m-1}+(p-1)p^{\frac{m-2}{2}})|R_{0,1}|=(p^{m-1}+(p-1)p^{\frac{m-2}{2}})\frac{p^d(p^m-1)}{p^d+1}.$$

%
This completes the whole proof of Theorem \ref{thm:code a bx}.
\hfill\space$\qed$
\end{proof}

\begin{corollary}\label{coro:C1}

If $m=2k$, then $C_1$ is a cyclic code over $\mathbb{F}_p$ with
parameters $[p^m-1,3m/2]$ and the weight distribution is
given as follows:
\begin{eqnarray}\label{W1:32 m=2k}
\left\{\begin{array}{l}A_0=1,\\A_{(p-1)p^{m-1}}=p^m-1,\\
  A_{(p-1)(p^{m-1}+p^{\frac{m-2}{2}})}=(p^{m-1}-(p-1)p^{\frac{m-2}{2}})(p^{\frac{m}{2}}-1), \\
  A_{(p-1)p^{m-1}-p^{\frac{m-2}{2}}}=(p-1)(p^{m-1}+p^{\frac{m-2}{2}})(p^{\frac{m}{2}}-1).\\
\end{array}
\right.
\end{eqnarray}
\end{corollary}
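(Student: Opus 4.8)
The plan is to run the same circle of ideas as in Theorem~\ref{thm:code a bx}, computing the weight of each codeword through $\omega t(\mathsf{c}_1(a,b)) = p^m - N_{a,b}(0)$ as in \eqref{W1:1234}, but with one essential new ingredient forced by the hypothesis $m=2k$. When $m=2k$ we have $d=\mathrm{gcd}(k,m)=k=m/2$ and $s=m/d=2$, so $\upsilon_2(m)=\upsilon_2(k)+1$ and we are inside case \textcircled{1} of Lemma~\ref{lm: exponentialsums}; the crucial difference is that now $m-2d=0$, so the rank of $Q$ is either $m$ or $0$, and the evaluation map $(a,b)\mapsto \mathsf{c}_1(a,b)$ is no longer injective (the dimension of $C_1$ drops to $3m/2$). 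First I would record the parameters: $\mathrm{deg}(h_1)=m/2$ and $\mathrm{deg}(h_2)=m$ give $\dim C_1=3m/2$, hence length $p^m-1$ and dimension $3m/2$.

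The key preliminary step is to quantify the non-injectivity. I would show that the kernel of $(a,b)\mapsto \mathsf{c}_1(a,b)$ has size exactly $p^{m/2}$, so that every codeword is the image of precisely $p^{m/2}$ pairs $(a,b)$. To see this, suppose $\text{Tr}(ax^{p^k+1}+bx)=0$ for all $x$. Replacing $x$ by $-x$ and using that $p^k+1$ is even (so $(-x)^{p^k+1}=x^{p^k+1}$) together with $p$ odd, one separates the quadratic and linear parts: $\text{Tr}(ax^{p^k+1})=0$ and $\text{Tr}(bx)=0$ for all $x$. The latter forces $b=0$, while the former says $a\in R_1\cup\{0\}$, i.e. $\mathrm{rank}(Q)=0$ or $a=0$. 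Since $|R_1|=p^{m/2}-1$ (a specialization of Lemma~\ref{lm: exponentialsums}\textcircled{1} at $d=m/2$), the kernel is $\{(a,0):a\in R_1\cup\{0\}\}$ and has cardinality $p^{m/2}$. Consequently, each frequency computed at the level of $(a,b)$-pairs must be divided by $p^{m/2}$ to give the true number of codewords.

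Next I would carry out the case analysis on $a$, specializing the counts from Lemma~\ref{lm: exponentialsums}\textcircled{1} at $d=m/2$: these give $|R_0|=|R_{0,-1}|=p^{m/2}(p^{m/2}-1)$ and $|R_1|=|R_{1,1}|=p^{m/2}-1$, with $|R_{0,1}|=|R_{1,-1}|=0$. The case $a=0$ and the case $a\in R_0$ are handled exactly as in Case~1 and Case~2 of part~\textcircled{1} of Theorem~\ref{thm:code a bx}: for $a\in R_0$ one has $\eta((-1)^{m/2}\Delta_0)=-1$ by Lemma~\ref{lm:type of qua form}, yielding $N_{a,b}(0)=p^{m-1}-(p-1)p^{(m-2)/2}$ and $N_{a,b}(0)=p^{m-1}+p^{(m-2)/2}$ with the frequencies stated there (now weighted by $|R_{0,-1}|$). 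The genuinely new case is $a\in R_1$: here $\mathrm{rank}(Q)=m-2d=0$, so $Q(x)\equiv0$ and Equation~\eqref{N1:0} collapses to the linear equation $\text{Tr}(bx)=0$; thus $N_{a,b}(0)=p^m$ when $b=0$ and $N_{a,b}(0)=p^{m-1}$ when $b\neq0$. This degenerate case replaces the more elaborate Case~3 of the theorem, which is what makes the resulting code have fewer weights.

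Finally I would assemble the weight distribution by collecting pairs according to the value of $N_{a,b}(0)$ and dividing by $p^{m/2}$. The pairs with $N=p^m$ come from $a=0,b=0$ and from $a\in R_1,b=0$, totalling $1+(p^{m/2}-1)=p^{m/2}$ pairs, which collapse to the single zero codeword $A_0=1$; this merging of the $R_1$ contributions into the zero codeword is exactly the payoff of the kernel computation. The pairs with $N=p^{m-1}$ come from $a=0,b\neq0$ and from $a\in R_1,b\neq0$, giving $(p^m-1)+(p^{m/2}-1)(p^m-1)=p^{m/2}(p^m-1)$ pairs, i.e. $A_{(p-1)p^{m-1}}=p^m-1$; and the two nondegenerate values from $a\in R_0$ supply the remaining two lines of \eqref{W1:32 m=2k} after division by $p^{m/2}$. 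The main obstacle is the non-injectivity of the evaluation map: without the factor $p^{m/2}$ the zero codeword would be overcounted and all frequencies would be wrong, and it is precisely this phenomenon, absent when $m\neq2k$, that forces $m=2k$ to be treated separately.
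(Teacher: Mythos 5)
Your proposal is correct, and it rests on the same two pillars as the paper's own proof --- the $p^{m/2}$-fold degeneracy of the parametrization $(a,b)\mapsto\mathsf{c}_1(a,b)$ and the specialization $d=m/2$ of the counts in Lemma~\ref{lm: exponentialsums} --- but your execution is more self-contained and, on one point, more careful. The paper merely observes that $\mathsf{c}_1(a,b)=\mathsf{c}_1(a+\delta,b)$ for $\delta$ in the $p^{m/2}$-element set $K=\{x:x^{p^k}+x=0\}$, then formally substitutes $d=m/2$ into \eqref{W1:31 m not equal 2k} and divides every frequency by $p^{m/2}$. That is a slight leap: \eqref{W1:31 m not equal 2k} is established only under the hypothesis $m\neq 2k$, and at $d=m/2$ its last two lines involve the exponent $m-2d-1=-1$, produce a ``weight'' $p^m$ exceeding the length (fortunately with frequency $0$), and a weight-$0$ class that has to be merged into $A_0$. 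You instead compute the kernel of the $\mathbb{F}_p$-linear evaluation map directly --- separating the quadratic and linear parts via $x\mapsto-x$, valid since $p^k+1$ is even --- obtaining exactly the set $\{(a,0):a\in K\}$ of size $p^{m/2}$, and you rerun Case 3 with $\mathrm{rank}(Q)=m-2d=0$, where the defining equation degenerates to $\mathrm{Tr}(bx)=0$. This makes the merging of the $R_1$ contributions into the weight classes $0$ and $(p-1)p^{m-1}$ transparent, and all four frequencies come out exactly as in \eqref{W1:32 m=2k}. The only blemish is a labelling slip: the hypothesis $\upsilon_2(m)=\upsilon_2(k)+1$ places you in case \textcircled{2} of Lemma~\ref{lm: exponentialsums}, not case \textcircled{1} (you presumably meant part \textcircled{1} of Theorem~\ref{thm:code a bx}); the counts $|R_{0,-1}|=p^{m/2}(p^{m/2}-1)$ and $|R_{1,1}|=p^{m/2}-1$ that you actually use are the correct ones.
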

\begin{proof}

Let $K=\{x\in\mathbb{F}_{p^m}\big|~x^{p^k}+x=0\}$.
 It is easy to check that $\mathsf{c}_1(a,b)=\mathsf{c}_1(a+\delta,b)$ for any $\delta\in K$ and $\mathsf{c}_1(a,b)\in C_1$. Hence, $C_1$ is degenerate with
 dimension $3m/2$ over $\mathbb{F}_p$.

 Note that $|K|=p^{\frac{m}{2}}$ and in this case $\upsilon_2(m)=\upsilon_2(k)+1$. Substituting $d=m/2$ to
 Equation \eqref{W1:31 m not equal 2k} and dividing each $A_i$ by
 $p^{\frac{m}{2}}$, we get the result given in \eqref{W1:32 m=2k}.
This finishes the proof of Corollary~\ref{coro:C1}. \hfill\space$\qed$
\end{proof}

%

\begin{remark}
It should be noted that, for $s$ being even, the weight distribution of the code $C_1$ is determined
by Theorem \ref{thm:code a bx} and Corollary
\ref{coro:C1}. The results show that $C_1$ is a cyclic code with three or five weights.
\end{remark}

We give some examples for the code $C_1$ in the case of
$\upsilon_2(m)\geqslant\upsilon_2(k)+1$,
i.e., $s$ is even, which is not included in
\cite{feng2007value,lichao2009covering,yuan2006weight}.

\begin{example}
Let $m=6,k=1,p=3$. This corresponds to the case
$\upsilon_2(m)=\upsilon_2(k)+1$ and $m\neq2k$. Using Magma, $C_1$ is a [728, 12,
432] cyclic linear code over $\mathbb{F}_3$ with the weight
distribution:
\begin{eqnarray*}
&&A_0=1,A_{432}=6006,A_{477}=275184,A_{486}=118664,\\&&A_{504}=122850,A_{513}=8736,
\end{eqnarray*}
which verifies the result of
Equation \eqref{W1:31 m not equal 2k} in Theorem \ref{thm:code a bx}.
\end{example}
\begin{example}
Let $m=4,k=1,p=5$. This corresponds to the case
$\upsilon_2(m)>\upsilon_2(k)+1$. Using Magma, $C_1$ is a [624, 8,
475] cyclic linear code over $\mathbb{F}_5$ with the weight
distribution:
\begin{eqnarray*}
&&A_0=1,A_{475}=2496,A_{480}=75400,A_{500}=63024,\\&&
A_{505}=249600,A_{600}=104,
\end{eqnarray*}which verifies the result of Equation \eqref{W1:4 v2(m)>v2(k)+1} in Theorem
\ref{thm:code a bx}.
\end{example}

\section{The weight distribution of the code $C_2$}\label{sec:C2}
In this section, we will study the weight distribution of the code $C_2$ as
described in Section \ref{sec:intro}. By the well-known Delsarte's
Theorem \cite{delsarte1975subfield}, we have
$$C_2=\{\mathsf{c}_2(a,c):a,c\in \mathbb{F}_{p^m}\},$$ where $\mathsf{c}_2(a,c)=(\text{Tr}(ax^{p^k+1}+c))_{x\in
\mathbb{F}^*_{p^m}}$.

For any two codewords $\mathsf{c}_2(a_1,c_1)$ and
$\mathsf{c}_2(a_2,c_2)$ in $C_2$ given above, it is easy to verify
that $\mathsf{c}_2(a_1,c_1)=\mathsf{c}_2(a_2,c_2)$ if and only if
$a_1=a_2$ and $\text{Tr}(c_1)=\text{Tr}(c_2)$. Hence, $C_2$ can be expressed as
$$C_2=\{\mathsf{c}_2(a,\lambda)=(\text{Tr}(ax^{p^k+1})-\lambda)_{x\in
\mathbb{F}^*_{p^m}}:a\in
\mathbb{F}_{p^m},\lambda\in\mathbb{F}_{p}\},$$ where
$\lambda=-\text{Tr}(c)$.

Let $N_{a,\lambda}(0)$ be the number of solutions $x\in
\mathbb{F}_{p^m}$ satisfying
\begin{eqnarray}\label{N2:0}
 \text{Tr}(ax^{p^k+1})-\lambda=0,
\end{eqnarray}
as $(a,\lambda)$ runs through
$\mathbb{F}_{p^m}\times\mathbb{F}_{p}$. By making a nonsingular
linear substitution as introduced in Section \ref{sec:Preli}, Equation
\eqref{N2:0} is equivalent to
\begin{eqnarray}\label{N2:0 2}
 \sum^m_{i=1}a_ix^2_i=\lambda,
\end{eqnarray} where $a_i\in\mathbb{F}_{p}$. Thus, $N_{a,\lambda}(0)$ also represents the number
of $(x_1,x_2,\dots,x_m)\in \mathbb{F}^m_{p}$ satisfying \eqref{N2:0 2}.

In the following, we establish the weight distribution of the code $C_2$
when $(a,\lambda)$ runs through
$\mathbb{F}_{p^m}\times\mathbb{F}_{p}$.

\begin{theorem}\label{thm:code a c}With notation as above. If $m\neq2k$, then $C_2$ is a cyclic code over $\mathbb{F}_p$
with parameters $[p^m-1,m+1]$ and\\
\textcircled{1} If $0=\upsilon_2(m)\leqslant \upsilon_2(k)$, the
weight distribution of $C_2$ is given as follows:
\begin{eqnarray}\label{W2:1 v2(m)=0}
\left\{\begin{array}{l}A_0=1,\\A_{p^m-1}=p-1,\\
  A_{(p-1)p^{m-1}}=p^m-1,\\
  A_{(p-1)p^{m-1}-p^{\frac{m-1}2}-1}=\frac{p-1}{2}(p^m-1),\\
  A_{(p-1)p^{m-1}+p^{\frac{m-1}2}-1}=\frac{p-1}{2}(p^m-1).\\
\end{array}
\right.
\end{eqnarray}
\textcircled{2} If $1\leqslant \upsilon_2(m)\leqslant
\upsilon_2(k)$, the weight distribution of $C_2$ is given as
follows:
\begin{eqnarray}\label{W2:2 1<=v2(m)}
\left\{\begin{array}{l}A_0=1,\\A_{p^m-1}=p-1,\\
  A_{(p-1)p^{m-1}-p^{\frac{m-2}2}-1}=\frac{p-1}{2}(p^m-1),\\
  A_{(p-1)p^{m-1}+p^{\frac{m-2}2}-1}=\frac{p-1}{2}(p^m-1), \\
  A_{(p-1)(p^{m-1}-p^{\frac{m-2}2})}=\frac{1}{2}(p^m-1), \\
  A_{(p-1)(p^{m-1}+p^{\frac{m-2}2})}=\frac{1}{2}(p^m-1).
\end{array}
\right.
\end{eqnarray}
\textcircled{3} If $v_2(m)=v_2(k)+1$, the weight distribution of
$C_2$  is given as follows:
\begin{eqnarray}\label{W2:31 m not equal 2k}
\left\{\begin{array}{l}A_0=1,\\A_{p^m-1}=p-1,\\
  A_{(p-1)(p^{m-1}+p^{\frac{m-2}2})}=\frac{p^d(p^m-1)}{p^d+1},\\
    A_{(p-1)p^{m-1}-p^{\frac{m-2}2}-1}=\frac{p^d(p-1)(p^m-1)}{p^d+1},\\
    A_{(p-1)(p^{m-1}-p^{\frac{m+2d-2}2})}=\frac{p^m-1}{p^d+1},\\
   A_{(p-1)p^{m-1}+p^{\frac{m+2d-2}2}-1}=\frac{(p-1)(p^m-1)}{p^d+1}.
\end{array}
\right.
\end{eqnarray}
\textcircled{4} If $\upsilon_2(m)>\upsilon_2(k)+1$, the weight
distribution of $C_2$ is given as follows:
\begin{eqnarray}\label{W2:4 v2(m)>v2(k)+1}
\left\{\begin{array}{l}A_0=1,\\A_{p^m-1}=p-1,\\
  A_{(p-1)(p^{m-1}-p^{\frac{m-2}2})}=\frac{p^d(p^m-1)}{p^d+1},\\
   A_{(p-1)p^{m-1}+p^{\frac{m-2}2}-1}=\frac{p^d(p-1)(p^m-1)}{p^d+1},\\
   A_{(p-1)(p^{m-1}+p^{\frac{m+2d-2}2})}=\frac{(p^m-1)}{p^d+1},\\
   A_{(p-1)p^{m-1}-p^{\frac{m+2d-2}2}-1}=\frac{(p-1)(p^m-1)}{p^d+1}.
\end{array}
\right.
\end{eqnarray}
\end{theorem}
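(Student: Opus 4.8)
The plan is to mirror the argument used for $C_1$ in Theorem~\ref{thm:code a bx}, replacing the linear term $\mathrm{Tr}(bx)$ by the constant $\lambda$, which introduces one extra bookkeeping step at $x=0$. First I would express the Hamming weight of $\mathsf{c}_2(a,\lambda)$ through $N_{a,\lambda}(0)$. The feature distinguishing $C_2$ from $C_1$ is that $x=0$ satisfies $\mathrm{Tr}(ax^{p^k+1})=\lambda$ exactly when $\lambda=0$; hence the weight equals $p^m-N_{a,\lambda}(0)$ for $\lambda=0$ and $p^m-1-N_{a,\lambda}(0)$ for $\lambda\neq0$. This one-unit shift is precisely what produces the $-1$ appearing in every weight attached to $\lambda\neq0$ and its absence from the $\lambda=0$ weights. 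The case $a=0$ is immediate: $(a,\lambda)=(0,0)$ gives the zero codeword, so $A_0=1$, while $a=0$ with $\lambda\neq0$ gives $N_{0,\lambda}(0)=0$ and weight $p^m-1$, contributing $A_{p^m-1}=p-1$ in all four parts.

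For $a\neq0$ I would diagonalise $Q(x)=\mathrm{Tr}(ax^{p^k+1})$ to reach the form in \eqref{N2:0 2}, writing $N_{a,\lambda}(0)$ as $p^{m-r}$ times the number of solutions of $\sum_{i=1}^{r}a_ix_i^2=\lambda$, where $r=\mathrm{rank}(Q)$. Lemma~\ref{lm: exponentialsums} supplies $r\in\{m,m-2d\}$ together with the frequencies of each value of $S(a)$, and Lemma~\ref{lm: solution of quadra form} evaluates the solution count: for even $r$ it carries a factor $\upsilon(\lambda)\,\eta\!\left((-1)^{r/2}\Delta_i\right)$, and for odd $r$ a factor $\eta\!\left((-1)^{(r-1)/2}\lambda\,\Delta_i\right)$. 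I would then invoke Lemma~\ref{lm:type of qua form} to replace $\eta\!\left((-1)^{[r/2]}\Delta_i\right)$ by the sign $j$ and the cardinalities $|R_{i,j}|$, all read off from Lemma~\ref{lm: exponentialsums}.

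The four parts are specialisations of this scheme. Parts \textcircled{1} ($m$ odd) and \textcircled{2} ($m$ even, $s$ odd) use branch \textcircled{1} of Lemma~\ref{lm: exponentialsums}, where $r=m$ always and $|R_{0,1}|=|R_{0,-1}|=(p^m-1)/2$; parts \textcircled{3} and \textcircled{4} use branches \textcircled{2} and \textcircled{3}, where $R_0=R_{0,\mp1}$ and $R_1=R_{1,\pm1}$ have sizes $\tfrac{p^d(p^m-1)}{p^d+1}$ and $\tfrac{p^m-1}{p^d+1}$. In each part I would separate $\lambda=0$ from $\lambda\neq0$. For $\lambda=0$ with even $r$, the factor $\upsilon(0)=p-1$ produces the weights of shape $(p-1)(p^{m-1}\pm p^{\bullet})$; for $\lambda=0$ with odd $r$ (part \textcircled{1}), the factor $\eta(0)=0$ collapses this contribution to the single weight $(p-1)p^{m-1}$ occurring $p^m-1$ times. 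For $\lambda\neq0$, the $p-1$ choices of $\lambda$ multiply each frequency by $p-1$ in the even-$r$ parts, or split residues against nonresidues evenly in part \textcircled{1}, and the $-1$ shift yields the weights $(p-1)p^{m-1}\pm p^{\bullet}-1$. Tallying these against $|R_{i,j}|$ and adjoining the $a=0$ contributions reproduces \eqref{W2:1 v2(m)=0}--\eqref{W2:4 v2(m)>v2(k)+1}.

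The main obstacle is the bookkeeping forced by the $x=0$ correction interacting with the parity of $r$. In the odd-rank part the residue character of $\lambda$ enters through $\eta(\lambda)$, so $\lambda=0$ behaves qualitatively differently (a single weight, no $\pm$ split) from the even-rank parts, where $\upsilon(0)=p-1$ retains a nonzero correction; obtaining the frequencies $\tfrac{p-1}{2}(p^m-1)$ versus $\tfrac12(p^m-1)$ demands carefully matching the sign $j=\eta\!\left((-1)^{[r/2]}\Delta_0\right)$ of each $a$ against the sign of $\eta(\lambda)$ or $\upsilon(\lambda)$. As a consistency check I would verify $\sum_i A_i=p^{m+1}$, and I would note that the hypothesis $m\neq2k$ ensures $\deg h_1=m$ and that the parametrisation by $(a,\lambda)$ is a bijection, giving dimension $m+1$ and the stated parameters $[p^m-1,m+1]$.
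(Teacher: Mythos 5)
Your proposal is correct and follows essentially the same route as the paper: reduce the weight to $N_{a,\lambda}(0)$ with the $\lambda=0$ versus $\lambda\neq0$ shift, diagonalise $Q(x)$, apply Lemma~\ref{lm: solution of quadra form} according to the parity of the rank, and read off signs and frequencies from Lemmas~\ref{lm: exponentialsums} and~\ref{lm:type of qua form}. The case split, the treatment of $\upsilon(0)=p-1$ versus $\eta(0)=0$, and the resulting frequencies all match the paper's argument.
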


\begin{proof}
The length and dimension follow immediately from the
definition of the code $C_2$. The Hamming weight of every codeword
$\mathsf{c}_2(a,\lambda)$ can be determined by
\begin{eqnarray}\label{W2:1234}
\omega t(\mathsf{c}_2(a,\lambda)) &=&p^m-1-\#\{x\in
\mathbb{F}^*_{p^m}\big|~\text{Tr}(ax^{p^k+1})-\lambda=0\}\nonumber\\
&=&\left\{\begin{array}{ll}p^m-N_{a,\lambda}(0), &if~~ \lambda=0,\\
p^m-1-N_{a,\lambda}(0), &if ~~\lambda\neq 0, \\
\end{array}
\right.
\end{eqnarray}where $\lambda=-\text{Tr}(c)$.

We will calculate the weight distribution of the code $C_2$ by distinguishing the following
cases.

 \textcircled{1} $0=\upsilon_2(m)\leqslant \upsilon_2(k)$.

 The value of $N_{a,\lambda}(0)$
 will be computed according to the choice of the parameter $a$.

\emph{Case 1:} $a=0$. In this case, if $\lambda=0$ then
$N_{a,\lambda}(0)=p^m$, and this value occurs only once, and if
$\lambda \neq 0$ then $N_{a,\lambda}(0)=0$, and this value occurs
$p-1$ times.

\emph{Case 2:} $a\in  \mathbb{F}^*_{p^m}$, i.e., $a\in R_0$. In this case,
rank$(Q(x))=m$ by Lemma \ref{lm: exponentialsums} and consequently
every coefficient $a_i$ in \eqref{N2:0 2} is nonzero.

From Lemma \ref{lm: solution of quadra form}, we have
$$
N_{a,\lambda}(0)=
p^{m-1}+p^{\frac{m-1}2}\eta({(-1)}^{\frac{m-1}2}\lambda\Delta_0).
$$

If $\lambda=0$ then $N_{a,\lambda}(0)=p^{m-1}$, and this value
occurs $p^m-1$ times.

If $\lambda \neq 0$, then there are $(p-1)/2$ squares and
nonsquares in $\mathbb{F}^*_p$, respectively. If $\lambda$ is a
square in $\mathbb{F}^*_p$, then
$$N_{a,\lambda}(0)=p^{m-1}+
p^{\frac{m-1}2}\eta((-1)^{\frac{m-1}2}\Delta_0).$$ Using Lemma
\ref{lm: exponentialsums} and Lemma \ref{lm:type of qua form}, we
find that
\begin{equation*}
N_{a,\lambda}(0)=\left\{\begin{array}{ll}p^{m-1}+ p^{\frac{m-1}2}&~~occurring~~ \frac{p-1}2|R_{0,1}|~~~~times,\\
p^{m-1}- p^{\frac{m-1}2} &~~occurring~~
\frac{p-1}2|R_{0,-1}| ~~times.\\
\end{array}\right.
\end{equation*}

Similarly, if $\lambda$ is a nonsquare in $\mathbb{F}^*_p$, then
 $$N_{a,\lambda}(0)=p^{m-1}-
p^{\frac{m-1}2}\eta((-1)^{\frac{m-1}2}\Delta_0).$$
This leads to
\begin{equation*}
N_{a,\lambda}(0)=\left\{\begin{array}{ll}p^{m-1}- p^{\frac{m-1}2}&~~occurring~~ \frac{p-1}2|R_{0,1}|~~~~times,\\
p^{m-1}+ p^{\frac{m-1}2} &~~occurring~~
\frac{p-1}2|R_{0,-1}| ~~times.\\
\end{array}\right.
\end{equation*}

By Equation \eqref{W2:1234} and the above analysis, we will derive the
result for case $0=\upsilon_2(m)\leqslant \upsilon_2(k)$ described
in \eqref{W2:1 v2(m)=0}.

Here we give the frequencies of the codewords with weight
$(p-1)p^{m-1}$ and $(p-1)p^{m-1}- p^{\frac {m-1}2}-1$. Other cases
can be analyzed in a similar way.

The weight of
$\mathsf{c}_2(a,\lambda)$ is equal to $(p-1)p^{m-1}$ if and only if
$N_{a,\lambda}(0)=p^{m-1}$ and $\lambda=0$. Thus the
above argument shows that the frequency is $p^m-1.$

The weight of $\mathsf{c}_2(a,\lambda)$ is equal to $(p-1)p^{m-1}-
p^{\frac {m-1}2}-1$ if and only if
$N_{a,\lambda}(0)=p^{m-1}+p^{\frac {m-1}2}$ and $\lambda\neq0$. The
frequency is equal to
$$\frac{p-1}{2}(|R_{0,1}|+|R_{0,-1}|)=\frac{p-1}{2}(p^m-1).$$

%

\textcircled{2} $1\leqslant \upsilon_2(m)\leqslant \upsilon_2(k).$

 The value of $N_{a,\lambda}(0)$
 will be calculated by distinguishing the case $a=0$ from the case $a\neq0$.

\emph{Case 1:} $a=0$. In this case, if $\lambda=0$ then
$N_{a,\lambda}(0)=p^m$, and this value occurs only once, and if
$\lambda \neq 0$ then $N_{a,\lambda}(0)=0$, and this value occurs
$p-1$ times.

\emph{Case 2:} $a\in  \mathbb{F}^*_{p^m}$, i.e., $a\in R_0$. In this case,
rank$(Q(x))= m$ by Lemma \ref{lm: exponentialsums} and consequently
every coefficient $a_i$ in \eqref{N2:0 2} is nonzero.

Applying Lemma \ref{lm: solution of quadra form} gives that
$$
N_{a,\lambda}(0) =
p^{m-1}+\upsilon(\lambda)p^{\frac{m-2}2}\eta({(-1)}^{\frac{m}2}\Delta_0).
$$

If $\lambda=0$ then
$$N_{a,\lambda}(0)=p^{m-1}+(p-1)p^{\frac{m-2}2}\eta({(-1)}^{\frac{m}2}\Delta_0).$$
It then follows from Lemmas \ref{lm: exponentialsums} and \ref{lm:type of qua
form} that
\begin{equation*}
N_{a,\lambda}(0)=\left\{\begin{array}{ll}p^{m-1}+(p-1)p^{\frac{m-2}2}&~~occurring~~ |R_{0,1}|~~~~times,\\
p^{m-1}-(p-1)p^{\frac{m-2}2} &~~occurring~~
|R_{0,-1}| ~~times.\\
\end{array}\right.
\end{equation*}

If $\lambda \neq 0$ then
$$N_{a,\lambda}(0)=p^{m-1}-p^{\frac{m-2}2}\eta({(-1)}^{\frac{m}2}\Delta_0).$$
Again by Lemmas \ref{lm: exponentialsums} and \ref{lm:type of
qua form}, we have
\begin{equation*}
N_{a,\lambda}(0)=\left\{\begin{array}{ll}p^{m-1}- p^{\frac{m-2}2}&~~occurring~~ (p-1)|R_{0,1}|~~~~times,\\
p^{m-1}+p^{\frac{m-2}2} &~~occurring~~
(p-1)|R_{0,-1}| ~~times.\\
\end{array}\right.
\end{equation*}

By Equation \eqref{W2:1234} and the above analysis, we will get the
result for case  $1\leqslant \upsilon_2(m)\leqslant \upsilon_2(k)$
described in \eqref{W2:2 1<=v2(m)}.

Here we give the frequencies of the codewords with weight
$(p-1)(p^{m-1}-p^{\frac{m-2}2})$ and
$(p-1)p^{m-1}-p^{\frac{m-2}2}-1$. Other cases can be similarly verified.

The weight of $\mathsf{c}_2(a,\lambda)$ is equal to
$(p-1)(p^{m-1}-p^{\frac{m-2}2})$ if and only if
$N_{a,\lambda}(0)=p^{m-1}+(p-1)p^{\frac{m-2}2}$ and $\lambda=0$.
Based on the above discussion, the frequency is $|R_{0,1}|=\frac{p^m-1}{2}.$

The weight of $\mathsf{c}_2(a,\lambda)$ is equal to $(p-1)p^{m-1}-
p^{\frac {m-2}2}-1$ if and only if
$N_{a,\lambda}(0)=p^{m-1}+p^{\frac {m-2}2}$ and $\lambda\neq0$. Therefore, the
frequency is equal to $$(p-1)|R_{0,-1}|=\frac{1}{2}(p-1)(p^m-1).$$

%

\textcircled{3}Let $v_2(m)=v_2(k)+1$ and $m\neq2k$.

 The value of $N_{a,\lambda}(0)$
 will be calculated by distinguishing among the following cases.

 \emph{Case 1:} $a=0$. In this case, if $\lambda=0$ then $N_{a,\lambda}(0)=p^m$, and this
value occurs only once, and if $\lambda \neq 0$ then
$N_{a,\lambda}(0)=0$, and this value occurs $p-1$ times.

\emph{Case 2:} $a\in R_0$. In this case, rank$(Q(x))=m$ and consequently every coefficient $a_i$
in \eqref{N2:0 2} is nonzero.

From Lemma \ref{lm: solution of quadra form}, we have
$$N_{a,\lambda}(0)=p^{m-1}+\upsilon(\lambda)p^{\frac{m-2}{2}}\eta((-1)^{\frac
m2}\Delta_0).$$

It then follows from Lemmas \ref{lm:
exponentialsums} and \ref{lm:type of qua form} that
$$N_{a,\lambda}(0)=p^{m-1}-\upsilon(\lambda)p^{\frac{m-2}{2}},$$
since $\eta((-1)^{\frac m2}\Delta_0)=-1$.

If $\lambda=0$, then $N_{a,\lambda}(0)=p^{m-1}-(p-1)p^{\frac{m-2}{2}}$ occurring
$|R_{0,-1}|$ times.

 If $\lambda\neq0$, then
$N_{a,\lambda}(0)=p^{m-1}+p^{\frac{m-2}{2}}$ occurring $(p-1)|R_{0,-1}|$ times.

\emph{Case 3:} $a\in R_1$.  In this case, rank$(Q(x))=m-2d$. Again by Lemmas \ref{lm: solution of quadra form}, \ref{lm: exponentialsums} and \ref{lm:type of qua form}, we
find
\begin{eqnarray*}N_{a,\lambda}(0)&=&p^{2d}(p^{m-2d-1}+\upsilon(\lambda)p^{\frac{m-2d-2}{2}}\eta((-1)^{\frac
{m-2d}2}\Delta_1)\\
&=&p^{m-1}+\upsilon(\lambda)p^{\frac{m+2d-2}{2}},
\end{eqnarray*}since $\eta((-1)^{\frac
{m-2d}2}\Delta_1)=1$.

If $\lambda=0$, then $N_{a,\lambda}(0)=p^{m-1}+(p-1)p^{\frac{m+2d-2}{2}}$
occurring $|R_{1,1}|$ times.

 If $\lambda\neq0$, then
$N_{a,\lambda}(0)=p^{m-1}-p^{\frac{m+2d-2}{2}}$ occurring $(p-1)|R_{1,1}|$
times.

By Equation \eqref{W2:1234} and the above analysis, we will obtain the
result for case  $v_2(m)=v_2(k)+1$ and $m\neq2k$ described in
\eqref{W2:31 m not equal 2k}.

Here we give the frequencies of the codewords with weight $p^{m}-1$
and $(p-1)(p^{m-1}+p^{\frac{m-2}2})$. Other cases can be analyzed in
an analogous manner.

 The weight of $\mathsf{c}_2(a,\lambda)$ is equal to
$p^{m}-1$ if and only if $N_{a,\lambda}(0)=0$ and $\lambda\neq0$.
The above discussion shows that the frequency is $p-1.$

The weight of $\mathsf{c}_2(a,\lambda)$ is equal to
$(p-1)(p^{m-1}+p^{\frac{m-2}2})$ if and only if
$N_{a,\lambda}(0)=p^{m-1}-(p-1)p^{\frac {m-2}2}$ and $\lambda=0$.
The frequency is $|R_{0,-1}|=\frac{p^d(p^m-1)}{p^d+1}.$

%

\textcircled{4}Let $\upsilon_2(m)>\upsilon_2(k)+1$.

 The value of $N_{a,\lambda}(0)$
 will be calculated according to the choice of the parameter $a$.

 \emph{Case 1:} $a=0$. In this case, if $\lambda=0$ then $N_{a,\lambda}(0)=p^m$, and this
value occurs only once, and if $\lambda \neq 0$ then
$N_{a,\lambda}(0)=0$, and this value occurs $p-1$ times.

\emph{Case 2:} $a\in R_0$. In this case, rank$(Q(x))=m$ and consequently every coefficient $a_i$
in \eqref{N2:0 2} is nonzero.

It then follows from Lemma \ref{lm: solution of quadra form} that
$$N_{a,\lambda}(0)=p^{m-1}+\upsilon(\lambda)p^{\frac{m-2}{2}}\eta((-1)^{\frac
m2}\Delta_0).$$

Applying Lemmas \ref{lm:
exponentialsums} and \ref{lm:type of qua form} yields that
$$N_{a,\lambda}(0)=p^{m-1}+\upsilon(\lambda)p^{\frac{m-2}{2}},$$
since $\eta((-1)^{\frac
m2}\Delta_0)=1$.

If $\lambda=0$, then $N_{a,\lambda}(0)=p^{m-1}+(p-1)p^{\frac{m-2}{2}}$ occurring
$|R_{0,1}|$ times.

 If $\lambda\neq0$, then
$N_{a,\lambda}(0)=p^{m-1}-p^{\frac{m-2}{2}}$ occurring $(p-1)|R_{0,1}|$ times.

\emph{Case 3:} $a\in R_1$.  In this case, rank$(Q(x))=m-2d$. Again by Lemmas \ref{lm: solution of quadra form},
\ref{lm: exponentialsums} and \ref{lm:type of qua form}, we
arrive at
\begin{eqnarray*}N_{a,\lambda}(0)&=&p^{2d}(p^{m-2d-1}+\upsilon(\lambda)p^{\frac{m-2d-2}{2}}\eta((-1)^{\frac
{m-2d}2}\Delta_1)\\
&=&p^{m-1}-\upsilon(\lambda)p^{\frac{m+2d-2}{2}},
\end{eqnarray*}
since $\eta((-1)^{\frac
{m-2d}2}\Delta_1)=-1$.

If $\lambda=0$, then $N_{a,\lambda}(0)=p^{m-1}-(p-1)p^{\frac{m+2d-2}{2}}$
occurring $|R_{1,-1}|$ times.

 If $\lambda\neq0$, then
$N_{a,\lambda}(0)=p^{m-1}+p^{\frac{m+2d-2}{2}}$ occurring $(p-1)|R_{1,-1}|$
times.

By Equation \eqref{W2:1234} and the above analysis, we will derive the
result for case  $\upsilon_2(m)>\upsilon_2(k)+1$ described in
\eqref{W2:4 v2(m)>v2(k)+1}.

Here we only show the frequencies of the codewords with weight $p^{m}-1$
and $(p-1)(p^{m-1}-p^{\frac{m-2}2})$. Other cases are similarly verified.

 The weight of $\mathsf{c}_2(a,\lambda)$ is equal to
$p^{m}-1$ if and only if $N_{a,\lambda}(0)=0$ and $\lambda\neq0$.
From the above discussion, the frequency is $p-1.$

The weight of $\mathsf{c}_2(a,\lambda)$ is equal to
$(p-1)(p^{m-1}-p^{\frac{m-2}2})$ if and only if
$N_{a,\lambda}(0)=p^{m-1}+(p-1)p^{\frac {m-2}2}$ and $\lambda=0$.
The frequency is $|R_{0,1}|=\frac{p^d(p^m-1)}{p^d+1}.$

This completes the proof of this theorem. \hfill\space$\qed$
\end{proof}


\begin{corollary}\label{coro:C2}
If $m=2k$, then $C_2$ is a cyclic code over $\mathbb{F}_p$ with
parameters $[p^m-1,{m}/2+1]$ and the weight distribution is
given as follows:
\begin{eqnarray}\label{W2:32 m=2k}
\left\{\begin{array}{l}A_0=1,\\A_{p^m-1}=p-1,\\
  A_{(p-1)(p^{m-1}+p^{\frac{m-2}2})}=p^{\frac{m}{2}}-1,\\
  A_{(p-1)p^{m-1}-p^{\frac{m-2}2}-1}=(p-1)(p^{\frac{m}{2}}-1).
\end{array}
\right.
\end{eqnarray}
\end{corollary}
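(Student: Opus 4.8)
The plan is to mimic the reduction used for Corollary \ref{coro:C1}, exploiting the fact that when $m=2k$ the code $C_2$ becomes degenerate. The key observation is that for $\delta\in K:=\{x\in\mathbb{F}_{p^m}\mid x^{p^k}+x=0\}$ the function $x\mapsto\mathrm{Tr}(\delta x^{p^k+1})$ vanishes identically. Indeed, writing $\mathrm{Tr}=\mathrm{Tr}^k_1\circ\mathrm{Tr}^{2k}_k$ and using $m=2k$ together with $x^{p^{2k}}=x$, one gets $\mathrm{Tr}^{2k}_k(\delta x^{p^k+1})=(\delta+\delta^{p^k})x^{p^k+1}=0$, so $\mathsf{c}_2(a,\lambda)=\mathsf{c}_2(a+\delta,\lambda)$ for all such $\delta$. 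Since $x\mapsto x^{p^k}$ is an involution of $\mathbb{F}_{p^{2k}}$ and $p$ is odd, its $(-1)$-eigenspace $K$ has $\mathbb{F}_p$-dimension $k$, whence $|K|=p^{m/2}$. Thus $a$ is determined only modulo $K$, and the same injectivity argument as before shows the dimension of $C_2$ drops from $m+1$ to $m/2+1$, giving the parameters $[p^m-1,\,m/2+1]$.

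For the weight distribution I would note that $m=2k$ forces $d=\gcd(k,m)=m/2$ and $\upsilon_2(m)=\upsilon_2(k)+1$, so the relevant counts are exactly those of case \textcircled{3} of Theorem \ref{thm:code a c}. Because adding $\delta\in K$ does not change the quadratic form $Q_a(x)=\mathrm{Tr}(ax^{p^k+1})$ at all, its rank is constant on each coset $a+K$, and every codeword is produced exactly $p^{m/2}$ times as $a$ ranges over such a coset. Substituting $d=m/2$ into \eqref{W2:31 m not equal 2k} and dividing each frequency by $|K|=p^{m/2}$ should therefore reproduce \eqref{W2:32 m=2k}. Concretely, the two weight classes coming from $a\in R_0$ (where $\mathrm{rank}(Q)=m$) survive, and their frequencies $\tfrac{p^d(p^m-1)}{p^d+1}$ and $\tfrac{p^d(p-1)(p^m-1)}{p^d+1}$ simplify, after setting $d=m/2$ and dividing by $p^{m/2}$, to $p^{m/2}-1$ and $(p-1)(p^{m/2}-1)$, matching the last two lines of \eqref{W2:32 m=2k}.

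The step requiring the most care is the part $a\in R_1$, because when $m=2k$ the rank $m-2d$ equals $0$, so Case 3 of the proof of Theorem \ref{thm:code a c} degenerates and cannot be quoted verbatim. Here I would argue that $R_1\cup\{0\}=K$, which is consistent with $|R_1|=\tfrac{p^m-1}{p^d+1}=p^{m/2}-1$ from Lemma \ref{lm: exponentialsums}, so that for every $a\in K$ the word $\mathsf{c}_2(a,\lambda)=(-\lambda)_{x}$ is constant. Taking $\lambda=0$ yields the zero codeword, contributing $A_0=1$; taking $\lambda\neq 0$ yields the $p-1$ distinct nonzero constant words of weight $p^m-1$, contributing $A_{p^m-1}=p-1$. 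In other words, the two weight classes that \eqref{W2:31 m not equal 2k} attributes to $a\in R_1$ do not scale like the $R_0$ classes: one fuses into the zero codeword and the other merges with the $a=0$, $\lambda\neq 0$ words, and it is precisely this merging, rather than a naive division by $p^{m/2}$, that produces the values $A_0=1$ and $A_{p^m-1}=p-1$. Verifying that the remaining $R_0$ codewords split into complete $K$-cosets, so that the division by $p^{m/2}$ is legitimate for the surviving weights, then completes the argument.
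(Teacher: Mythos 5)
Your proof is correct and follows essentially the same route as the paper: exploit the degeneracy of $\mathsf{c}_2(a,\lambda)$ under translation of $a$ by $K$, set $d=m/2$ in case \textcircled{3} of Theorem \ref{thm:code a c}, and divide the frequencies by $|K|=p^{m/2}$. Your explicit treatment of the rank-zero case $a\in R_1=K\setminus\{0\}$, where the two corresponding weight classes collapse onto weight $0$ and weight $p^m-1$ before the division, is a careful elaboration of a step the paper's one-line recipe leaves implicit.
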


\begin{proof} Let
$K=\{x\in\mathbb{F}_{p^m}\big|~x^{p^k}+x=0\}$.
 It is easily checked that $\mathsf{c}_2(a,\lambda)=\mathsf{c}_2(a+\delta,c)$ for any $\delta\in K$ and $\mathsf{c}_2(a,\lambda)\in C_2$. Hence, $C_2$ is degenerate with
 dimension ${m}/{2}+1$ over $\mathbb{F}_p$.

 Note that $|K|=p^{\frac{m}{2}}$ and in this case $\upsilon_2(m)=\upsilon_2(k)+1$. Substituting $d={m}/{2}$ to
 Equation \eqref{W2:31 m not equal 2k} and dividing each $A_i$ by
 $p^{\frac{m}{2}}$, we get the desired result. Now the proof of Corollary \ref{coro:C2}
is complete. \hfill\space$\qed$\end{proof}
%

The following are some examples for the code $C_2$. Note that the weight
distribution of $C_2$ is not known before.
\begin{example}
Let $m=6,k=2,p=3$. This corresponds to the case
$1\leqslant\upsilon_2(m)\leqslant\upsilon_2(k)$. Using Magma, $C_2$
is a [728, 7, 468] cyclic linear code over $\mathbb{F}_3$ with the
weight distribution:
$$A_0=1,A_{468}=364,A_{476}=728,A_{494}=728,A_{504}=364,A_{728}=2,$$
 which confirms the result of Equation \eqref{W2:2 1<=v2(m)} in Theorem \ref{thm:code a c}.
\end{example}
\begin{example}
Let $m=8,k=1,p=3$. This corresponds to the case
$\upsilon_2(m)>\upsilon_2(k)+1$. Using Magma, $C_2$ is a [6560, 9,
4292] cyclic linear code over $\mathbb{F}_3$ with the weight
distribution:
\begin{eqnarray*}
&&A_0=1,A_{4292}=3280,A_{4320}=4920,A_{4400}=9840,\\
&&A_{4536}=1640,A_{6560}=2,
\end{eqnarray*}
 which confirms the result of Equation \eqref{W2:4
 v2(m)>v2(k)+1} in Theorem \ref{thm:code a c} .
\end{example}
\begin{example}
Let $m=6,k=3,p=3$. This corresponds to the case $m=2k$. Using Magma, $C_2$
is a [728, 4, 476] cyclic linear code over $\mathbb{F}_3$ with the
weight distribution:
$$A_0=1,A_{476}=52,A_{504}=26,A_{728}=2,$$
 which confirms the result of Equation \eqref{W2:32
 m=2k} in Corollary \ref{coro:C2}.
\end{example}
\section{Conclusion and remarks}\label{sec:conclusion}

In this paper, we completely determined the  weight distributions of
two classes of cyclic codes $C_1$ for even $s$ and $C_2$ over $\mathbb{F}_p$. The
result showed that they have only few weights. In addition, one can get
the value distributions of the corresponding exponential sums of
$C_1$ and $C_2$ by the method described in the proofs of Theorems \ref{thm:code a
bx} and \ref{thm:code a c} though we did not list them
here.

We mention that the weight distributions of several other cyclic codes
may be solved essentially, such as, a family of $p$-ary cyclic
codes with parity-check polynomial $(x-1)h_1(x)h_2(x)$, where
$h_1(x)$ and $h_2(x)$ are defined in Section \ref{sec:intro}. We
leave this for future work.

\begin{acknowledgements}
The work of Zheng-An Yao is partially supported by the NNSFC (Grant No.11271381), the NNSFC (Grant No.11431015)
and China 973 Program (Grant No. 2011CB808000).
The work of Chang-An Zhao is partially supported by the
NNSFC (Grant No. 61472457).
\end{acknowledgements}

\end{document}